\newtheorem{lemma}{Lemma}
\newtheorem{theorem}{Theorem}
\newtheorem{cor}{Corollary}
\newcolumntype{d}{D{.}{.}{-1}}
\title{\LARGE \bf
Consensus over Weighted Directed Graphs: A Robustness Perspective
}
\author{Dwaipayan Mukherjee and Daniel Zelazo
\thanks{This work was supported in part at the Technion by a fellowship of the Israel Council for Higher Education and the Israel Science Foundation (grant No. 1490/1).
D. Mukherjee ({\tt\small dwaipayan.mukherjee2@gmail.com}) and D. Zelazo ({\tt\small dzelazo@technion.ac.il}) are with the Faculty of Aerospace Engineering, Technion - Israel Institute of Technology, Haifa 32000, Israel.} 
%
}
\begin{document}

\maketitle
\thispagestyle{empty}
\pagestyle{empty}

\begin{abstract}

The present paper investigates the robustness of the consensus protocol over weighted directed graphs using the Nyquist criterion. The limit to which a single weight can vary, while consensus among the agents can be achieved, is explicitly derived. It is shown that even with a negative weight on one of the edges, consensus may be achieved. The result obtained in this paper is applied to a directed acyclic graph and to the directed cycle graph. Graph theoretic interpretations of the limits are provided for the two cases. Simulations support the theoretical results. 

\end{abstract}

\section{Introduction}
\label{sec: intro}

\par{}The consensus protocol is an important problem in multi-agent systems, that has received a lot of attention \cite{beard}. In this context, some work on the robustness of undirected graphs has been carried out by merging concepts from graph theory and robust control \cite{DZ, DZ1}. These involve the application of the small gain theorem to the networked dynamic system described by the graph Laplacian and the edge Laplacian matrices. Particularly, \cite{DZ1} considered the possibility of admitting negative weights on some of the edges. The context in which negative edge weights arise are discussed therein and also in the special case of cyclic pursuit as in \cite{AS, DM1}. However, only undirected graphs, whose Laplacians are symmetric and therefore lend themselves to analysis, have been studied. This paper considers a weighted directed graph (digraph) for similar robustness studies. Thus the agents run a consensus protocol over a weighted digraph \cite{MesbahiEgerstedt}. 
It will be shown in this paper that even in the absence of symmetric Laplacians, robust stability analysis can be carried out for a special class of weighted digraphs. 

The networked system is first transformed to edge variables, leading to a directed \emph{edge agreement protocol}, originally studied in \cite{DZ} for undirected graphs. This work further develops properties of the \emph{directed edge Laplacian} matrix. Some recent work such as \cite{recent, DEL, DEL1} also present some results on the un-weighted edge Laplacian for a digraph. In \cite{DZ} the edge Laplacian aided in studying the roles of certain subgraphs such as cycles and spanning trees in the agreement problem. Both \cite{DZ} and \cite{DZ1} built the platform for robustness studies (performance and stability) of the consensus problem over undirected graphs. 
The main focus of this work is to consider the robust stability of the directed and weighted edge agreement protocol where uncertainty in the model is introduced in the form of a perturbation to one of the edge weights. 
The robust stability result for a general weighted digraph is first derived using the Nyquist criteria. Further analysis is then provided, along with graph-theoretic interpretations, for two specific classes of graphs - the directed acyclic graph and the directed cycle graph. 
It is shown that for a directed acyclic graph, robust stability requires the magnitude of the negative weight of the uncertain edge to be less than the sum of the nominal positive weights of its sibling edges. 
For the directed cycle graph, it is shown that the limit on the perturbation on a single edge weight is the same as the one obtained in the literature \cite{AS}, \cite{DM1}. In terms of graph resistance, this limit is such that the resistance of a perturbed edge, $e_k$, running from node $i$ to node $j$, must be at least equal to the negative of the equivalent graph resistance between nodes $i$ and $j$, with $e_k$ removed. 
\par{} Section \ref{sec:edgelap} describes the edge Laplacian for a weighted digraph and then some of its properties are stated. The robust stability of the uncertain edge protocol for a weighted digraph is analyzed in Section \ref{sec:robust}. 
Section~\ref{sec: sim} presents relevant simulations to support the results and Section~\ref{sec: conc} concludes the paper.
\paragraph*{Notation} The null space and range space of a matrix $A$ are denoted by $\mathcal{N}(A)$ and $\mathcal{R}(A)$, respectively. The vectors of all-ones and all-zeros in $\mathbb{R}^p$ are denoted by $\mathbf{1}_p$ and $\mathbf{0}_p$ respectively. 
A weighted digraph, $\mathcal{G}$, is specified by its vertex set $V$, the edge set $\mathcal{E}$ that captures the incidence relation between pairs of $V$, and the diagonal weight matrix $W$ which contains the weights of the edges. When the weights are all unity, the graph is represented by $V$ and $\mathcal{E}$ only. Throughout this paper, it is assumed that $|V|=n$ and $|\mathcal{E}|=m$. 

\section{Directed Weighted Edge Laplacian}\label{sec:edgelap}

The graph Laplacian matrix provides a beautiful link between discrete notions in graph theory to continuous representations, such as vector spaces and manifolds \cite{GR}.  Motivated by its role in consensus-seeking systems, an edge variant of the Laplacian, known as the edge Laplacian, was introduced in \cite{DZ}. In this section, an extension of this work is presented by considering directed and weighted graphs. As will be shown in Section \ref{sec:robust}, the edge Laplacian for digraphs provides the correct algebraic construction to analyze the robustness of consensus protocols over digraphs.

Some notions related to digraphs are first reviewed. A node $v \in V$ that can be reached by a directed path from every other node in $\mathcal{G}$ is termed a \emph{globally reachable node}. For any digraph containing at least one globally reachable node, a spanning subgraph $\mathcal{G}_\tau \subseteq \mathcal{G}$, termed a \emph{rooted in-branching}, is defined such that there exists a directed path from every node to a globally reachable node (or \emph{root}), and all other nodes, except this root with out-degree $0$, have out-degree equal to 1 in $\mathcal{G}_\tau$. For consensus over a digraph, there must be a globally reachable node, and hence a rooted in-branching \cite{AC}.
For a digraph with a rooted in-branching, another subgraph, $\mathcal{G}_c$, can be defined such that $\mathcal{G}_{\tau} \cup \mathcal{G}_{c} = \mathcal{G}$. The subgraph $\mathcal{G}_\tau$ has $n-1$ directed edges in the edge set $\mathcal{E}_\tau$, while the remaining $m-n+1$ edges constitute the edge set $\mathcal{E}_c$ corresponding to $\mathcal{G}_{c}$ (with $\mathcal{E}=\mathcal{E}_{\tau} \cup \mathcal{E}_c$). 

For undirected graphs, the graph and edge Laplacian matrices can be defined in terms of the incidence matrix, $E(\mathcal{G})$.  The incidence matrix is defined such that $[E(\mathcal{G})]_{ij}=1$ if edge $e_j$ is outgoing from vertex $i$, $[E(\mathcal{G})]_{ij}=-1$ if edge $e_j$ is incoming at vertex $i$, and $[E(\mathcal{G})]_{ij}=0$ otherwise. The graph Laplacian for a directed graph can be defined as $L_g=\mathcal{A}(\mathcal{G})E(\mathcal{G})^{T}$, where $\mathcal{A}(\mathcal{G}) \in \mathbb{R}^{n \times m}$ is such that $[\mathcal{A}(\mathcal{G})]_{ij}=1$ if the edge $e_j$ is outgoing from vertex $i$ and is 0 otherwise \cite{recent}. 
Similarly, $L_{e}=E(\mathcal{G})^{T}\mathcal{A}(\mathcal{G})$ is defined as the \emph{directed edge Laplacian}. 
The matrices $E(\mathcal{G})$ and $\mathcal{A}(\mathcal{G})$, for the digraph $\mathcal{G}$, may be written as $E$ and $\mathcal{A}$ for brevity. 

\par{}The graph Laplacian and the edge Laplacian for the weighted digraph $\mathcal{G}$ are given by $\bar{L}_g=\mathcal{A}(\mathcal{G})WE(\mathcal{G})^T$ and $\bar{L}_e=E(\mathcal{G})^{T}\mathcal{A}(\mathcal{G})W$, respectively, where, $W\in \mathbb{R}^{m\times m}$ is a diagonal matrix, whose diagonal entries are the weights of the corresponding edges, that is $W_{ii}=w_i>0~\forall i$.

\subsection{The Directed Edge Laplacian: Properties}
  
The directed edge Laplacian holds the key to the dynamics of the directed edge agreement problem. Hence, the important properties of $\bar{L}_e$ are central to an analysis of this problem. The following results, stated without proof, aid in that direction. Some recent works also focus on directed edge Laplacians with identical weights on all edges \cite{recent, DEL, DEL1}. For a nonsingular $W$, $\mathbf{dim}[\mathcal{N}(\mathcal{A})]=\mathbf{dim}[\mathcal{N}\mathbf(\mathcal{A}W)]$ and $\mathcal{R}(\mathcal{A})=\mathcal{R}(\mathcal{A}W)$. 
 
\begin{lemma}
For general weighted digraphs, $\mathcal{N}(\mathcal{A}W)\subseteq \mathcal{N}(\bar{L}_e)$. For weakly connected weighted digraphs, if there is at least one node with out-degree = 0, then $\mathcal{N}(\bar{L}_e)=\mathcal{N}(\mathcal{A}W)$, otherwise $\mathbf{1_n} \in \mathcal{R}(\mathcal{A})$ and $\mathcal{N}(\mathcal{A}W)\subset \mathcal{N}(\bar{L}_e)$.
\end{lemma}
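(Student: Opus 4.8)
The plan is to work directly from the definitions $\bar{L}_e = E^{T}\mathcal{A}W$ and $\bar{L}_g = \mathcal{A}WE^{T}$, exploiting the fact that these two matrices share the same nonzero spectrum and that their null spaces are linked through $E^T$ and $\mathcal{A}W$. First I would prove the easy inclusion: if $x \in \mathcal{N}(\mathcal{A}W)$, then $\bar{L}_e x = E^{T}(\mathcal{A}W x) = E^{T}\mathbf{0}_n = \mathbf{0}_m$, so $x \in \mathcal{N}(\bar{L}_e)$; this gives $\mathcal{N}(\mathcal{A}W) \subseteq \mathcal{N}(\bar{L}_e)$ with no connectivity assumption, which is the first sentence. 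Note also that since $W$ is nonsingular, $\mathcal{N}(\mathcal{A}W)$ has dimension $\mathbf{dim}[\mathcal{N}(\mathcal{A})] = m - \mathrm{rank}(\mathcal{A})$, a fact I will want for the dimension count below.

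For the reverse inclusion under weak connectivity, take $x \in \mathcal{N}(\bar{L}_e)$, i.e. $E^{T}\mathcal{A}W x = \mathbf{0}_m$. Set $y = \mathcal{A}W x \in \mathbb{R}^n$; then $E^{T} y = \mathbf{0}$, which means $y \in \mathcal{N}(E^{T})$. For a weakly connected (di)graph the incidence matrix $E$ has rank $n-1$ with $\mathcal{N}(E^{T}) = \mathrm{span}\{\mathbf{1}_n\}$, so $y = \alpha \mathbf{1}_n$ for some scalar $\alpha$. Thus $\mathcal{A}W x = \alpha \mathbf{1}_n$. Two cases arise. If there is a node with out-degree $0$, then the corresponding row of $\mathcal{A}$ (hence of $\mathcal{A}W$) is identically zero, forcing that entry of $\mathcal{A}Wx$ to be $0$, hence $\alpha = 0$, hence $\mathcal{A}Wx = \mathbf{0}$, i.e. $x \in \mathcal{N}(\mathcal{A}W)$; combined with the first inclusion this yields $\mathcal{N}(\bar{L}_e) = \mathcal{N}(\mathcal{A}W)$. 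If, on the other hand, every node has out-degree $\ge 1$, then $\mathbf{1}_n \in \mathcal{R}(\mathcal{A}) = \mathcal{R}(\mathcal{A}W)$ is exactly the content of the claim (one shows $\mathbf{1}_n \in \mathcal{R}(\mathcal{A})$ directly: pick for each node one outgoing edge, which selects a set of columns of $\mathcal{A}$ summing to $\mathbf{1}_n$, or invoke that for a digraph with an in-branching every node has an outgoing edge and $\mathcal{R}(\mathcal{A})$ contains $\mathbf{1}_n$); consequently there exists $x_0$ with $\mathcal{A}W x_0 = \mathbf{1}_n \ne \mathbf{0}$, so $x_0 \in \mathcal{N}(\bar{L}_e) \setminus \mathcal{N}(\mathcal{A}W)$, giving the strict inclusion $\mathcal{N}(\mathcal{A}W) \subsetneq \mathcal{N}(\bar{L}_e)$.

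The main obstacle I anticipate is making the dichotomy airtight: one must verify that these are genuinely the only two possibilities and that in the second case the inclusion is strict rather than merely an inclusion. The cleanest route is a rank/dimension argument — show $\mathbf{dim}[\mathcal{N}(\bar{L}_e)] = \mathbf{dim}[\mathcal{N}(\mathcal{A}W)]$ when some out-degree is zero and $\mathbf{dim}[\mathcal{N}(\bar{L}_e)] = \mathbf{dim}[\mathcal{N}(\mathcal{A}W)] + 1$ otherwise — using that $\bar{L}_e$ and $\bar{L}_g$ have the same rank (both equal $\mathrm{rank}(\mathcal{A}W E^T)$), together with the known dimension of $\mathcal{N}(\bar{L}_g)$ for weighted digraphs with a globally reachable node. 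A secondary technical point is confirming that $\mathbf{1}_n \in \mathcal{R}(\mathcal{A})$ is equivalent to "no node has out-degree $0$" for a weakly connected digraph; this follows because a zero row of $\mathcal{A}$ obstructs $\mathbf{1}_n \in \mathcal{R}(\mathcal{A})$, while if no row is zero one can explicitly build $\mathbf{1}_n$ as a nonnegative combination of columns by choosing one outgoing edge per vertex. With these pieces in place the three assertions of the lemma follow immediately.
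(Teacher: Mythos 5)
Your proof is correct and follows essentially the same route as the paper: the trivial forward inclusion, the observation that $\mathcal{N}(E^{T})=\mathrm{span}\{\mathbf{1}_n\}$ for a weakly connected graph, and the dichotomy on whether $\mathbf{1}_n\in\mathcal{R}(\mathcal{A})$ to decide equality versus strict inclusion. If anything, your handling of the zero--out-degree case is slightly tighter than the paper's, since arguing that a zero row of $\mathcal{A}W$ forces the scalar $\alpha$ to vanish rules out \emph{all} nonzero multiples of $\mathbf{1}_n$, not just $\mathbf{1}_n$ itself.
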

\begin{proof}
Suppose $y$ belongs to $\mathcal{N}(\mathcal{A}W)$. Thus, $\mathcal{A}Wy=\mathbf{0}$. However, this also implies that $\bar{L}_{e}y=E^{T}\mathcal{A}Wy=\mathbf{0}$. 
Suppose $\mathbf{1_n} \in \mathcal{R}(\mathcal{A})$ implying every node has positive out-degree. 
For some nontrivial vector $x\in \mathbb{R}^{m}$, $\mathcal{A}x=\mathbf{1_n}$. Since $E^{T}\mathbf{1_n}=\mathbf{0}$, it follows that $\bar{L}_{e}(W^{-1}x)=E^{T}\mathcal{A}x=E^{T}\mathbf{1_n}=\mathbf{0}$ even though $W^{-1}x \not \in \mathcal{N}(\mathcal{A}W)$. This implies $\mathcal{N}(\mathcal{A}W)\subset \mathcal{N}(\bar{L}_e)$. 
Suppose $\mathbf{1_n}\not \in \mathcal{R}(\mathcal{A})$. Then there is at least one node whose out-degree is 0. If $\bar{L}_{e}x=\mathbf{0}$ for some $x\neq \mathbf{0}$, then either $\mathcal{A}Wx=\mathbf{0}$ or $\mathcal{A}Wx\in \mathcal{N}(E^T)$. Since for a weakly connected graph, $\mathbf{rank}(E^T)=n-1$ implying $\mathbf{dim}[\mathcal{N}(E^T)]=1$ and $\mathcal{N}(E^T)$ is spanned by $\mathbf{1_n}$, while $\mathcal{A}Wx \neq\mathbf{1_n}$ for any $Wx$, by assumption, it follows that $\mathcal{N}(\bar{L}_e)=\mathcal{N}(\mathcal{A}W)$.
\end{proof}
\begin{lemma}
The following statements are equivalent:
\begin{enumerate}
\item $\mathcal{A}$ has a nontrivial null space.
\item $\mathcal{A}$ has at least two identical columns.
\item The out-degree of at least one vertex in $\mathcal{G}$ is greater than unity. 
\end{enumerate}
\end{lemma}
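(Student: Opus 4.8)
The plan is to exploit the very simple structure of $\mathcal{A}$: by definition every one of its $m$ columns has a single nonzero entry, equal to $1$, located in the row indexed by the tail (source) vertex of the corresponding edge. Equivalently, the column of $\mathcal{A}$ associated with edge $e_j$ is the standard basis vector $\chi_{t(e_j)}\in\mathbb{R}^n$, where $t(e_j)$ denotes the unique vertex from which $e_j$ is outgoing. I would state this observation first, since all three equivalences follow almost immediately from it, and then close the loop $(1)\Rightarrow(3)\Rightarrow(2)\Rightarrow(1)$.

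For $(1)\Rightarrow(3)$: take a nonzero $v\in\mathcal{N}(\mathcal{A})$ and expand $\mathcal{A}v=\sum_{j=1}^m v_j\,\chi_{t(e_j)}=\mathbf{0}$. Collecting coefficients of $\chi_i$ for each vertex $i\in V$ gives $\sum_{j:\,t(e_j)=i} v_j=0$. Since $v\neq\mathbf{0}$, there is some index $j$ with $v_j\neq 0$; the sum corresponding to the vertex $i=t(e_j)$ cannot reduce to this single term (it would then be nonzero), so that vertex is the tail of at least two edges, i.e.\ its out-degree exceeds unity. (An equivalent, more structural phrasing of this implication: if every out-degree were at most one, the map $e_j\mapsto t(e_j)$ would be injective, the columns of $\mathcal{A}$ would be distinct standard basis vectors of $\mathbb{R}^n$, hence linearly independent, forcing $\mathcal{N}(\mathcal{A})=\{\mathbf{0}\}$.)

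For $(3)\Rightarrow(2)$: if a vertex $i$ has out-degree at least two, choose two distinct edges $e_j,e_k$ outgoing from $i$; then columns $j$ and $k$ of $\mathcal{A}$ both equal $\chi_i$, so $\mathcal{A}$ has two identical columns. For $(2)\Rightarrow(1)$: if columns $j$ and $k$ of $\mathcal{A}$ coincide with $j\neq k$, then $\mathcal{A}(\chi_j-\chi_k)=\mathbf{0}$ where $\chi_j-\chi_k\in\mathbb{R}^m$ is nonzero, so $\mathcal{N}(\mathcal{A})$ is nontrivial. This completes the cycle.

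I do not expect a genuine obstacle here; the argument is elementary once the column structure of $\mathcal{A}$ is made explicit. The only point requiring care is the bookkeeping in $(1)\Rightarrow(3)$ — not conflating the column index (an edge) with the row index (a vertex), and correctly grouping the equation $\mathcal{A}v=\mathbf{0}$ by tail vertex before concluding that some vertex must host two outgoing edges.
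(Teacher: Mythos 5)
Your proof is correct and rests on the same key observation as the paper's: every column of $\mathcal{A}$ is a standard basis vector indexed by the tail vertex of the corresponding edge, so linear dependence of the columns forces two of them to coincide, which in turn means two edges share a parent vertex. You merely organize the equivalences as a cycle $(1)\Rightarrow(3)\Rightarrow(2)\Rightarrow(1)$ and spell out the grouping-by-tail-vertex step more explicitly than the paper does; the substance is identical.
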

\begin{proof}If $\mathcal{A}$ has a nontrivial null space, its columns, each corresponding to an edge of $\mathcal{G}$, are not linearly independent. Moreover, each column of $\mathcal{A}$ will have a $1$ at only one position and zeros elsewhere. Thus, for linear dependence, at least two columns of $\mathcal{A}$ must be identical and vice versa. But identical columns imply two edges emerging from the same vertex and vice versa. Thus, 1 $\Leftrightarrow$ 2 $\Leftrightarrow$ 3. 
\end{proof}
\begin{lemma}
If $\mathcal{G}$ has $r$ such vertices whose out-degrees are greater than or equal to 1, then $\mathbf{dim}[\mathcal{N}(\bar{L}_e)]\geq m-r$.
\end{lemma}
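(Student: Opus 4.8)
The plan is to bound the nullity of $\bar{L}_e$ from below by the nullity of $\mathcal{A}W$, using the containment $\mathcal{N}(\mathcal{A}W)\subseteq\mathcal{N}(\bar{L}_e)$ already established in Lemma~1, and then to compute $\mathbf{dim}[\mathcal{N}(\mathcal{A}W)]$ exactly by exploiting the combinatorial structure of $\mathcal{A}$.

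First I would invoke the fact stated just before Lemma~1 that for the nonsingular diagonal matrix $W$ one has $\mathbf{dim}[\mathcal{N}(\mathcal{A}W)]=\mathbf{dim}[\mathcal{N}(\mathcal{A})]$, so it suffices to determine $\mathbf{rank}(\mathcal{A})$. The key observation is structural: by definition each column of $\mathcal{A}$ is a standard basis vector of $\mathbb{R}^n$, namely the column associated with edge $e_j$ is the unit vector $\mathbf{e}_i$ supported at the tail vertex $v_i$ of $e_j$. Hence the set of \emph{distinct} columns of $\mathcal{A}$ is precisely $\{\mathbf{e}_i : \text{out-degree of } v_i \geq 1\}$, a set of exactly $r$ linearly independent vectors. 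Therefore $\mathbf{rank}(\mathcal{A})=r$, and by the rank--nullity theorem $\mathbf{dim}[\mathcal{N}(\mathcal{A})]=m-r$.

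Combining the two facts yields $\mathbf{dim}[\mathcal{N}(\bar{L}_e)]\geq \mathbf{dim}[\mathcal{N}(\mathcal{A}W)]=\mathbf{dim}[\mathcal{N}(\mathcal{A})]=m-r$, which is the claim. I do not expect any real obstacle here; the only points meriting care are the identification of $\mathcal{R}(\mathcal{A})$ with the span of the unit vectors indexed by the positive-out-degree vertices, and the observation that the inclusion in Lemma~1 can be strict (as that same lemma shows when $\mathbf{1}_n\in\mathcal{R}(\mathcal{A})$), which is exactly why the conclusion is an inequality rather than an equality.
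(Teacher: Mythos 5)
Your proposal is correct and follows essentially the same route as the paper: identify the columns of $\mathcal{A}$ (equivalently $\mathcal{A}W$) with the standard basis vectors supported at the tail vertices, conclude $\mathbf{rank}(\mathcal{A}W)=r$, apply rank--nullity to get $\mathbf{dim}[\mathcal{N}(\mathcal{A}W)]=m-r$, and finish with the inclusion $\mathcal{N}(\mathcal{A}W)\subseteq\mathcal{N}(\bar{L}_e)$ from Lemma~1. The only cosmetic difference is that the paper exhibits $r$ linearly independent columns by choosing one edge per positive-out-degree vertex, whereas you observe directly that the set of distinct columns is exactly that collection of $r$ unit vectors; both give the same rank computation.
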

\begin{proof}Suppose there are $r$ vertices from which there is at least one outgoing edge. Choose one edge each that emerges from each of these $r$ vertices. The columns of $\mathcal{A}$ (or $\mathcal{A}W$) corresponding to these $r$ edges will be linearly independent. Hence, $\mathbf{rank}(\mathcal{A}W)=r$. By rank-nullity theorem \cite{HJ}, $\mathbf{dim}[\mathcal{N(\mathcal{A}W)}]= m-r$. Now, by Lemma 1, $\mathcal{N}(\mathcal{A}W)\subseteq \mathcal{N}(\bar{L}_e)$. This implies that $m-r \leq \mathbf{dim}[\mathcal{N}(\bar{L}_e)]$.
\end{proof}

\begin{lemma} If a digraph $\mathcal{G}$ has multiple globally reachable nodes, then they form directed cycle(s) in $\mathcal{G}$ and $\mathbf{1_n}\in \mathcal{R}(\mathcal{A})$.
\end{lemma}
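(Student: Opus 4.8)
The plan is to treat the two conclusions separately, both by elementary reachability arguments. For the claim about directed cycles, I would take two distinct globally reachable nodes $v$ and $w$. Since $w$ is globally reachable there is a directed path from $v$ to $w$, and since $v$ is globally reachable there is a directed path from $w$ to $v$; concatenating these produces a closed directed walk passing through $v$. Such a walk can be decomposed into directed cycles, at least one of which passes through $v$. Applying this to every pair of globally reachable nodes shows each of them lies on a directed cycle of $\mathcal{G}$, which is precisely the assertion that they ``form directed cycle(s)''. A cleaner way to phrase the same step is to note that mutual reachability restricted to the globally reachable nodes makes them all lie in a single strongly connected component of $\mathcal{G}$, and a strong component with more than one vertex contains a directed cycle through each of its vertices.

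For the second conclusion, I would first recall the equivalence already used in the proof of Lemma 1: $\mathbf{1}_n \in \mathcal{R}(\mathcal{A})$ if and only if every vertex of $\mathcal{G}$ has out-degree at least one. Indeed, the $i$th row of $\mathcal{A}$ is identically zero exactly when vertex $i$ has no outgoing edge, and conversely, if every vertex has an outgoing edge, picking one such edge per vertex and placing a $1$ in the corresponding entry of a vector $x\in\mathbb{R}^m$ (zeros elsewhere) yields $\mathcal{A}x=\mathbf{1}_n$. So it suffices to rule out a vertex of out-degree $0$.

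To do that, suppose some vertex $u$ has out-degree $0$; then $u$ reaches no other vertex. Since $\mathcal{G}$ has at least two globally reachable nodes, at least one of them, say $v$, is different from $u$, and global reachability of $v$ forces a directed path from $u$ to $v$ — contradicting that $u$ reaches nothing. Hence every vertex has out-degree at least one, and therefore $\mathbf{1}_n\in\mathcal{R}(\mathcal{A})$. I expect the only point needing care is the appeal in the first paragraph to the fact that a closed directed walk decomposes into directed cycles (provable by induction on walk length, extracting a cycle at the first repeated vertex); the remaining steps are a one-line contradiction and a direct construction, so no real obstacle is anticipated there.
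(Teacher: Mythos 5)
Your proof is correct and follows essentially the same route as the paper: mutual reachability of two distinct globally reachable nodes yields a closed directed walk and hence a directed cycle, and the observation that every vertex must have positive out-degree gives $\mathbf{1}_n\in\mathcal{R}(\mathcal{A})$ via the standard basis columns of $\mathcal{A}$. You are in fact slightly more careful on two points the paper glosses over --- explicitly decomposing the closed walk into cycles, and noting that out-degree at least one (not ``greater than unity,'' as the paper states) is what is actually needed and what actually holds (the directed cycle graph itself has all out-degrees equal to one).
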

\begin{proof} Suppose the set $V_{GR}$ contains all globally reachable nodes in the digraph $\mathcal{G}$. Then, for any two vertices $v_i, v_j \in  V_{GR}$, there exist directed paths from $v_i$ to $v_j$ and from $v_j$ to $v_i$, thus forming a directed cycle. Also, any node that is part of either path is also globally reachable. 
If there are multiple globally reachable nodes, then every node, including the globally reachable nodes, must have an out-degree greater than unity. Thus, $\mathcal{R}(\mathcal{A})=\mathcal{R}(I_n)$ and $\mathcal{R}(\mathcal{A})$ spans $\mathbb{R}^n$. So, $\mathbf{1_n}\in \mathcal{R}(\mathcal{A})$.
\end{proof}
\par{}Any vertex with an out-degree greater than unity contributes to $\mathcal{N}(\bar{L}_e)$, by Lemmas 1-3. Moreover, from Lemma 1, if $\mathbf{1_n} \in \mathcal{R}(\mathcal{A})$, then $\mathcal{N}(\bar{L}_e)\neq \mathcal{N}(\mathcal{A}W)$. By Lemma 4, a digraph having multiple globally reachable nodes must have a directed cycle among the globally reachable nodes and so every node must have an out-degree greater than 1. Thus, from Lemmas 1 and 4, $\mathcal{N}(\bar{L}_e)\neq \mathcal{N}(\mathcal{A}W)$ for such graphs. 

\subsection{Laplacians of Weighted Digraphs: Factorisations}
\label{sec:fact}
To understand the graph theoretic relation between the edges in $\mathcal{G}_\tau$ and $\mathcal{G}_c$ and to characterize the latter in terms of the former, the incidence matrix $E(\mathcal{G})$ 
can be factorized in certain forms. These factorisations also aid in the subsequent analysis in Section \ref{sec:robust}.
\begin{figure}[t]
     \centering
      \includegraphics[scale=0.750]{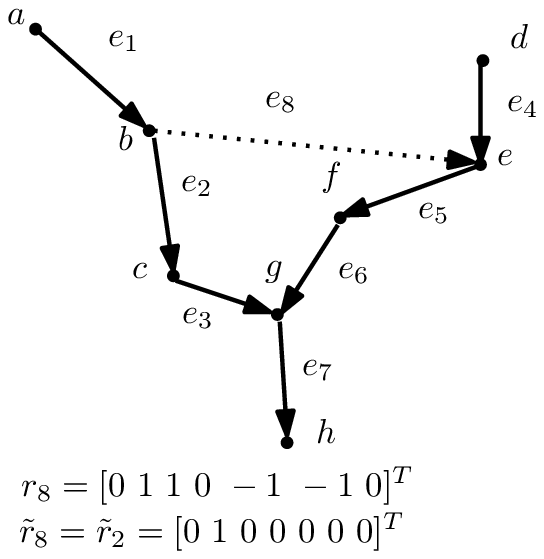}
      \caption{Dotted edge $e_8$ (sibling to edge $e_2$, with parent node $b$) encoded in terms of the edges in the rooted in-branching.}
      \label{demo}
      \vspace*{-0.5cm}
\end{figure}
  Define two edges outbound from the same node (parent node) as \emph{sibling edges}. Further, suppose that for the particular $\mathcal{G}_\tau$, the edges in $\mathcal{E}_\tau$ are labelled $e_1$ through $e_{n-1}$ with the corresponding parent nodes labelled $1$ through $n-1$. Clearly, no edge in $\mathcal{E}_\tau$ has a sibling in $\mathcal{G}_\tau$. The node with zero out-degree in $\mathcal{G}_\tau$ (which corresponds to any one globally reachable node, among possibly several, in $\mathcal{G}$) 
is labelled $n$. The incidence matrix is
\begin{align}
\label{inci}
\hspace{-5pt}E(\mathcal{G})=[E(\mathcal{G_\tau})~E(\mathcal{G}_c)]=E(\mathcal{G_\tau})[I_{n-1}~T_{\tau}]=E(\mathcal{G_\tau})R,
\end{align} 
where $T_{\tau} \in \mathbb{R}^{(n-1) \times (m-n+1)}$ may be given by
\begin{align}
\label{cite}
T_{\tau}=(E(\mathcal{G_\tau})^{T}E(\mathcal{G_\tau}))^{-1}E(\mathcal{G_\tau})^{T}E(\mathcal{G}_c),
\end{align}
as in \cite{DZ}. 
The matrix $E(\mathcal{G_\tau})^{T} \in \mathbb{R}^{(n-1)\times n}$ has full row rank and so the right inverse $E(\mathcal{G_\tau})(E(\mathcal{G_\tau})^{T}E(\mathcal{G_\tau}))^{-1}$ exists. Similarly, for a digraph with a single globally reachable node 
\begin{align}
\label{inc}
\hspace{-5pt}\mathcal{A}(\mathcal{G})=[\mathcal{A}(\mathcal{G_\tau})~\mathcal{A}(\mathcal{G}_c)]=\mathcal{A}(\mathcal{G_\tau})[I_{n-1}~\tilde{T}_{\tau}]=\mathcal{A}(\mathcal{G_\tau})\tilde{R},
\end{align} 
where $\tilde{T}_{\tau} \in \mathbb{R}^{(n-1) \times (m-n+1)}$, given by 
\begin{align}
\label{inci1}
 \tilde{T}_{\tau}=(A(\mathcal{G_\tau})^{T}A(\mathcal{G_\tau}))^{-1}A(\mathcal{G_\tau})^{T}A(\mathcal{G}_c),
\end{align}
encodes the siblings of edges in $\mathcal{E}_\tau$, that are in $\mathcal{E}_{c}$, while $\mathcal{A}(\mathcal{G_\tau})$ corresponds to edges in $\mathcal{E_\tau}$. 
For ${R}$, the last $m-n+1$ columns 
represent how the $m-n+1$ edges in $\mathcal{E}_{c}$ can be encoded in terms of the edges in $\mathcal{E}_\tau$ by a signed path vector \cite{DZ}, as illustrated in the example of Fig. \ref{demo}. A \emph{signed path} corresponding to an edge $e_i \in \mathcal{E}_{c}$ between nodes $a$ and $b$ in $\mathcal{G}$ is a sequence of edges in $\mathcal{G}_\tau$ 
such that this unoriented path leads from node $a$ to node $b$. Denote the $i$-th columns of $\tilde{R}$ and ${R}$ as $\tilde{r}_i$ and $r_{i}$, respectively, with $r_{i}(k)$ denoting the $k$-th entry of the column vector $r_{i}$. If the signed path corresponding to any of the edges $e_i \in \mathcal{E}_{c}$ involves traversing an edge $e_j \in \mathcal{G}_\tau$ in the same direction as its indicated direction in $\mathcal{G}_\tau$ (or $\mathcal{G}$), then $r_{i}(j)=+1$, whereas if it is traversed in a direction opposite to that marked on it, then the same entry is $-1$. If the signed path does not involve traversal of $e_j \in \mathcal{E}_\tau$, then $r_{i}(j)=0$. In the example of Fig. \ref{demo}, $e_8 \in \mathcal{E}_{c}$ is encoded in terms of $e_2$, $e_3$, $e_6$, and $e_5$ in $\mathcal{E}_\tau$. Thus, the corresponding entries in $r_8 \in \mathbb{R}^7$, are non-zero with the sign indicating the direction in which these edges are traversed (whether in the same direction as indicated by the arrowheads of the digraph, or opposite to it), 
while the other entries are zero. Also, every edge in $\mathcal{E}_{c}$ is a sibling edge to an edge in $\mathcal{E}_\tau$. So, the column in $\tilde{R}$ corresponding to any edge $e_q \in \mathcal{E}_{c}$, will be a replica of the column corresponding to its sibling edge in $\mathcal{E}_\tau$. 
Hence,
\begin{align}
\label{inter}
\tilde{r}_i= \tilde{r}_j,~1\leq j \leq n-1
\end{align}
where, edge $e_i \in \mathcal{E}_c$ and edge $e_j \in \mathcal{E}_\tau$ are sibling edges and 
\begin{align}
\label{inter}
r_{i}(k)=\begin{cases}
    +1,& \text{if $e_k$ is travelled in the $+$ direction, } \\
   -1,              & \text{if $e_k$ is travelled in the $-$ direction,}\\
   0, & \text{if $e_k$ is not traversed}
\end{cases}
\end{align}
in the signed path for $e_i$, for $n-1 <i \leq m$. The following result is the same as Proposition 3.10 of \cite{MesbahiEgerstedt}.
\begin{lemma}[\cite{MesbahiEgerstedt}]
For a weighted digraph $\mathcal{G}$ having a rooted in-branching and positive weights on all edges, the eigenvalues of the graph Laplacian $\bar{L}_g$ belong to the union of the open right half plane with the origin.
\end{lemma}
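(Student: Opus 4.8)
\emph{Proof sketch (proposed approach).} The plan is to read off the entrywise structure of $\bar{L}_g = \mathcal{A}(\mathcal{G}) W E(\mathcal{G})^T$ and then apply Gershgorin's disk theorem. First I would expand $[\bar{L}_g]_{pq} = \sum_{j} [\mathcal{A}]_{pj}\, w_j\, [E^T]_{jq} = \sum_j [\mathcal{A}]_{pj}\, w_j\, [E]_{qj}$ and observe that a single edge $e_j$ directed from node $i$ to node $k$ contributes only through row $p=i$: it adds $+w_j$ to the $(i,i)$ entry (since $[\mathcal{A}]_{ij}[E]_{ij} = 1\cdot 1$) and $-w_j$ to the $(i,k)$ entry (since $[\mathcal{A}]_{ij}[E]_{kj} = 1\cdot(-1)$), and nothing elsewhere. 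Summing over edges, $[\bar{L}_g]_{ii}$ equals the weighted out-degree $d_i := \sum_{e_j\text{ out of }i} w_j \ge 0$, every off-diagonal entry is $\le 0$, and every row of $\bar{L}_g$ sums to zero; equivalently $\bar{L}_g\mathbf{1}_n = \mathcal{A}W(E^T\mathbf{1}_n) = \mathbf{0}_n$ because each column of $E$ contains exactly one $+1$ and one $-1$. In particular $0$ is always an eigenvalue, which is the ``origin'' appearing in the statement.

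Next I would invoke Gershgorin's theorem for the rows of $\bar{L}_g$. Since each row sums to zero and the off-diagonal entries are nonpositive, the $i$-th Gershgorin radius is $R_i = \sum_{q\ne i}\bigl|[\bar{L}_g]_{iq}\bigr| = -\sum_{q\ne i}[\bar{L}_g]_{iq} = [\bar{L}_g]_{ii} = d_i$, so every eigenvalue lies in $\bigcup_i D_i$ with $D_i = \{\, z\in\mathbb{C} : |z-d_i|\le d_i \,\}$ --- a closed disk centered on the nonnegative real axis and tangent to the imaginary axis at the origin (degenerating to $\{0\}$ when $d_i=0$). The last step is the elementary observation that for $z$ with $\operatorname{Re}(z)=x$ and $\operatorname{Im}(z)=y$, the inequality $(x-d_i)^2+y^2\le d_i^2$ rearranges to $x^2+y^2\le 2 d_i x$, which forces $x\ge 0$ and allows $x=0$ only when $y=0$. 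Hence each $D_i$, and therefore their union, is contained in the open right half-plane together with the origin, and the claim follows.

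I do not expect a genuine obstacle here; the only place requiring care is the bookkeeping in the first step, i.e.\ keeping the sign conventions for $\mathcal{A}(\mathcal{G})$ and $E(\mathcal{G})$ from Section~\ref{sec:edgelap} straight so that the zero-row-sum, nonpositive-off-diagonal structure of $\bar{L}_g$ comes out correctly. It is also worth remarking that the rooted-in-branching hypothesis plays no role in this (weaker) localization statement --- Gershgorin alone suffices --- and is instead what would be needed to upgrade the conclusion to ``$0$ is a simple eigenvalue of $\bar{L}_g$,'' via a rank argument on $E(\mathcal{G}_\tau)$ or on $\bar{L}_e$.
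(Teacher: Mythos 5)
Your Gershgorin argument is correct: the entrywise computation of $\bar{L}_g=\mathcal{A}WE^T$ as the weighted out-degree Laplacian (nonnegative diagonal, nonpositive off-diagonals, zero row sums), the disk localization $|z-d_i|\le d_i$, and the observation that such disks meet the imaginary axis only at the origin all check out, and you rightly note that the rooted-in-branching hypothesis is not needed for this localization but only to make $0$ a simple eigenvalue. The paper itself gives no proof --- the lemma is stated as a citation of Proposition 3.10 of the Mesbahi--Egerstedt reference --- and your argument is precisely the standard Gershgorin proof of that cited result, so there is nothing further to reconcile.
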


\begin{lemma}
The edge Laplacian $\bar{L}_e$ and the graph Laplacian $\bar{L}_g$ for a weighted directed graph (with positive weights) $\mathcal{G}$ have the same non-zero eigenvalues. 
\end{lemma}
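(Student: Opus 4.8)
The plan is to exploit the fact that $\bar{L}_g=\mathcal{A}WE^T$ and $\bar{L}_e=E^T\mathcal{A}W$ are products of the same two matrices taken in opposite order. It is a standard linear-algebra fact that for any matrices $P\in\mathbb{R}^{n\times m}$ and $Q\in\mathbb{R}^{m\times n}$, the products $PQ$ and $QP$ have the same nonzero eigenvalues with the same algebraic multiplicities; here I would take $P=\mathcal{A}W$ (or $\mathcal{A}$, absorbing $W$ either way) and $Q=E^T$, so $PQ=\mathcal{A}WE^T=\bar{L}_g$ and $QP=E^T\mathcal{A}W=\bar{L}_e$. This immediately gives the claim once that fact is invoked.

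First I would recall or prove the auxiliary fact. The cleanest route is the characteristic-polynomial identity $\lambda^m\det(\lambda I_n-PQ)=\lambda^n\det(\lambda I_m-QP)$, which follows from the block-matrix Schur-complement computation on $\begin{bmatrix}\lambda I_n & P\\ Q & I_m\end{bmatrix}$ evaluated two ways, or equivalently from the similarity of $\begin{bmatrix}PQ & 0\\ Q & 0\end{bmatrix}$ and $\begin{bmatrix}0 & 0\\ Q & QP\end{bmatrix}$. Either way, the nonzero roots of $\det(\lambda I_n-PQ)$ and $\det(\lambda I_m-QP)$ coincide with multiplicity; only the multiplicity of the eigenvalue $0$ differs (by $|m-n|$, accounting for the difference in matrix sizes). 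Applying this with $P=\mathcal{A}W$, $Q=E^T$ yields that $\bar{L}_g$ and $\bar{L}_e$ share all nonzero eigenvalues.

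Alternatively, if one prefers a more graph-theoretic flavour consistent with the rest of the paper, I would use the factorizations from Section~\ref{sec:fact}. Writing $E=E(\mathcal{G}_\tau)R$ and $\mathcal{A}=\mathcal{A}(\mathcal{G}_\tau)\tilde{R}$ one can reduce both Laplacians to $(n-1)\times(n-1)$ ``essential'' matrices acting on the tree-edge coordinates, show these two reduced matrices are similar (via the invertible $E(\mathcal{G}_\tau)^T E(\mathcal{G}_\tau)$ or its $\mathcal{A}$-analogue), and conclude that the nonzero spectra agree while the extra eigenvalues are all zero. This is more work than the $PQ/QP$ argument but dovetails with the paper's development.

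I do not anticipate a serious obstacle: the only subtlety is bookkeeping the multiplicity of the zero eigenvalue, which differs because $\bar{L}_g$ is $n\times n$ while $\bar{L}_e$ is $m\times m$ with $m\ge n-1$, and the statement is careful to claim only agreement of the \emph{non-zero} eigenvalues. If a stronger statement about multiplicities is wanted, one must invoke Lemmas~1--4 to pin down $\dim\mathcal{N}(\bar{L}_e)$, but for the stated claim the generic $PQ$--$QP$ lemma suffices directly.
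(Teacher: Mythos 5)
Your proposal is correct and rests on exactly the same observation as the paper: $\bar{L}_g=(\mathcal{A}W)E^T$ and $\bar{L}_e=E^T(\mathcal{A}W)$ are products of the same pair of matrices in opposite order, so their non-zero spectra coincide. The paper proves this by the elementary eigenvector-intertwining argument (premultiplying $\bar{L}_g x=\lambda x$ by $E^T$ and $\bar{L}_e y=\mu y$ by $\mathcal{A}W$), which is just the two-line version of the general $PQ$--$QP$ lemma you invoke via the characteristic-polynomial identity.
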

\begin{proof} Consider a non-zero eigenvalue of the graph Laplacian $\bar{L}_g$, say $\lambda$. Hence, $\bar{L}_g x= \mathcal{A}WE^T x=\lambda x$, where $x\neq \mathbf{1}_n$. Premultiplying both sides by $E^T$, it follows that $\bar{L}_e E^T x=\lambda E^T x \neq \mathbf{0}$. Thus, $\lambda$ is also a nonzero eigenvalue of $\bar{L}_e$. Conversely, let $\mu$ be a non-zero eigenvalue of $\bar{L}_e$. Thus, $\bar{L}_e y=\mu y$. Premultiplying both sides by $\mathcal{A}W$ (clearly, $\mathcal{A}Wy \neq \mathbf{0}_n$ as $\mu\neq 0$), it is easy to see that $\bar{L}_g(\mathcal{A}W)y=\mu (\mathcal{A}W)y$. Thus, $\mu$ is also a non-zero eigenvalue of $\bar{L}_g$.
\end{proof}
\begin{lemma}In a weighted digraph with positive weights containing a rooted in-branching, the algebraic multiplicity and geometric multiplicity of the zero eigenvalue of $\bar{L}_e$ are equal to $m-n+1$.
\end{lemma}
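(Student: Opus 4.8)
The plan is to pin down the algebraic multiplicity through a characteristic-polynomial identity, bound the geometric multiplicity from below by a rank argument, and then conclude that the two coincide because the geometric multiplicity can never exceed the algebraic one.

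First I would write $\bar{L}_e = E^T(\mathcal{A}W)$ and $\bar{L}_g = (\mathcal{A}W)E^T$ as the two products of the $m\times n$ matrix $E^T$ with the $n\times m$ matrix $\mathcal{A}W$, and invoke the standard fact that $\det(\lambda I_m - E^T\mathcal{A}W) = \lambda^{m-n}\det(\lambda I_n - \mathcal{A}WE^T)$ --- the same $XY$-versus-$YX$ mechanism already used in the proof of Lemma 6, now recorded at the level of characteristic polynomials rather than merely the nonzero spectrum. Since $\mathcal{G}$ contains a rooted in-branching, it has a globally reachable node, and for such a digraph the zero eigenvalue of $\bar{L}_g$ is simple (the remaining $n-1$ eigenvalues lying in the open right half plane, by Lemma 5); hence the characteristic polynomial of $\bar{L}_g$ is $\lambda\, p(\lambda)$ with $p(0)\neq 0$. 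Substituting, the characteristic polynomial of $\bar{L}_e$ equals $\lambda^{m-n+1} p(\lambda)$, so $0$ is an eigenvalue of $\bar{L}_e$ of algebraic multiplicity exactly $m-n+1$.

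Next I would address the geometric multiplicity, i.e. $\mathbf{dim}[\mathcal{N}(\bar{L}_e)]$. Because $\bar{L}_e = E^T(\mathcal{A}W)$, we have $\mathbf{rank}(\bar{L}_e)\le \mathbf{rank}(E^T) = n-1$, using that a digraph admitting a rooted in-branching is weakly connected, so that $E$ has rank $n-1$ (as already noted in the proof of Lemma 1). By the rank-nullity theorem, $\mathbf{dim}[\mathcal{N}(\bar{L}_e)]\ge m-(n-1) = m-n+1$. Since the geometric multiplicity of an eigenvalue is at most its algebraic multiplicity, and the latter is $m-n+1$ by the previous paragraph, both multiplicities are exactly $m-n+1$ (and, as a by-product, $\mathbf{rank}(\bar{L}_e) = n-1$ and $0$ carries no Jordan block, which is the feature that makes $\bar{L}_e$ usable in the robustness analysis that follows).

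The only step requiring care is the claim that $0$ is a simple eigenvalue of $\bar{L}_g$: Lemma 5 only locates the spectrum, so I would supplement it with the classical characterization that a digraph Laplacian has a simple zero eigenvalue exactly when the digraph contains a spanning directed tree --- equivalently a globally reachable node, which the rooted in-branching supplies (see \cite{MesbahiEgerstedt}). The determinant identity is textbook and the rank bound is immediate, so this structural fact about $\bar{L}_g$ is the load-bearing ingredient of the argument.
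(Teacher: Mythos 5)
Your proof is correct, but it takes a genuinely different route from the paper's. For the algebraic multiplicity, the paper argues from Lemma 6 (shared nonzero spectrum of $\bar{L}_e$ and $\bar{L}_g$) together with $\mathbf{rank}(\bar{L}_g)=n-1$; your Sylvester-type identity $\det(\lambda I_m - E^T\mathcal{A}W)=\lambda^{m-n}\det(\lambda I_n - \mathcal{A}WE^T)$ is tighter, since it transfers eigenvalue multiplicities exactly rather than only the set of nonzero eigenvalues (the paper's passage from ``same nonzero eigenvalues'' to ``same rank'' to ``algebraic multiplicity $m-n+1$'' is loose as stated). Both arguments lean on the same external fact --- that $0$ is a simple eigenvalue of $\bar{L}_g$ when a globally reachable node exists --- which you correctly identify as the load-bearing ingredient. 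The larger divergence is in the geometric multiplicity: the paper splits into two cases according to whether $\mathcal{G}$ has one or several globally reachable nodes and in each case exhibits an explicit basis of $\mathcal{N}(\bar{L}_e)$, namely $\mathcal{N}(\mathcal{A}W)$ alone in the first case and $\mathcal{N}(\mathcal{A}W)$ augmented by a vector $x$ with $\mathcal{A}Wx=\mathbf{1}_n$ in the second, invoking Lemmas 1--4. Your rank bound $\mathbf{rank}(\bar{L}_e)\le\mathbf{rank}(E^T)=n-1$, sandwiched against the algebraic multiplicity, reaches the same count in two lines with no case analysis; what it gives up is the concrete graph-theoretic description of the null space (out-degrees, the role of $\mathbf{1}_n\in\mathcal{R}(\mathcal{A})$) that the paper reuses in its surrounding discussion. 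One cosmetic caveat: your determinant identity as written presumes $m\ge n$; in the degenerate case $m=n-1$ (the graph is exactly the in-branching) one should quote the symmetric form $\lambda^{n}\det(\lambda I_m - XY)=\lambda^{m}\det(\lambda I_n - YX)$, under which the conclusion $m-n+1=0$ still follows.
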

\begin{proof} From Lemma 6, it is clear that the non-zero eigenvalues of $\bar{L}_e$ and $\bar{L}_g$ are identical. Thus, $\mathbf{rank}(\bar{L}_g)=\mathbf{rank}(\bar{L}_e)$. Since the graph $\mathcal{G}$ has rooted in-branching, the corresponding unoriented version of this branching is a spanning tree. Thus, by \cite{RenB}, $\mathbf{rank}(\bar{L}_g)=n-1$. So for $\bar{L}_e$, the algebraic multiplicity of the zero eigenvalue is $m-n+1$.
\par{}{\it Case 1}: Suppose there is only one globally reachable node in $\mathcal{G}$. Then, from Lemma 1, $\mathcal{N}(\bar{L}_e)=\mathcal{N}(\mathcal{A}W)$ since $\mathbf{1_n}\not \in \mathcal{R}(\mathcal{A})$ and $\mathbf{dim}[\mathcal{N}(\bar{L}_e)]= \mathbf{dim}[\mathcal{N}(\mathcal{A}W)]=\mathbf{dim}[\mathcal{N}(\mathcal{A})]=m-n+1$.
\par{}{\it Case 2}: Consider, multiple globally reachable nodes, in which case every node of $\mathcal{G}$ has an out-degree greater than 0. Hence, by Lemma 1, $\mathcal{N}(\mathcal{A}W)\subset \mathcal{N}({L}_e)$ and $\mathbf{dim}[\mathcal{N}(\mathcal{A}W)]=m-n$. But, $\mathbf{1_n}\in \mathcal{R}(\mathcal{A})$. So, it only remains to be shown that the vector $x$, such that $\mathcal{A}Wx=\mathbf{1_n}$, when added to the basis set of $\mathcal{N}(\mathcal{A}W)$, forms an independent set.  Without loss of generality, consider a labelling so that the first $n$ edges ($e_1$ through $e_n$) emerge from the first $n$ vertices of $\mathcal{G}$. Thus the first $n$ columns from the basis for $\mathcal{R}(\mathcal{A})$ and also the standard basis for $\mathbb{R}^n$. The remaining $m-n$ columns are identical to any one of these first $n$ columns. Hence, one choice for the basis vectors of $\mathcal{N}(\mathcal{A}W)$ will comprise vectors $y_i \in \mathbb{R}^m$ of the form $[\overbrace{0~\ldots~1/w_s~0~\ldots~0}^{n}~\underbrace{0~\ldots~-1/w_r~0~\ldots~0}_{m-n}]^T$. Moreover, for the chosen labelling, it is clear that $\mathcal{A}Wx=\mathbf{1_n}$ for $x=[1/w_1~\ldots~1/w_s~1/w_{s+1}~\ldots~1/w_n~\mathbf{0}_{m-n}^T]^T$. Observe that $x$, when augmented with $\{y_i\}_{i=1,\ldots,m-n}$, forms a linearly independent set and this augmented set forms the basis for $\mathcal{N}(\bar{L}_e)$. Thus, $\mathbf{dim}[\mathcal{N}(\bar{L}_e)]=m-n+1$ because geometric multiplicity $\leq$ algebraic multiplicity.
\end{proof}
\begin{lemma}
In a weighted digraph with positive weights and rooted in-branching, the graph Laplacian $\bar{L}_g$ is similar to 
$$\left[\begin{array}{cc}
E(\mathcal{G_\tau})^{T}\mathcal{A}(\mathcal{G})WR^{T} & \mathbf{0_{n-1}}\\
\mathbf{1_n}^T\mathcal{A}(\mathcal{G})WR^{T} & 0 \end{array} \right].$$
\end{lemma}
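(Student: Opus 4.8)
The plan is to exhibit an explicit nonsingular $n\times n$ matrix $S$ that realizes the claimed similarity, using the factorization $E(\mathcal{G})=E(\mathcal{G}_\tau)R$ from \eqref{inci}. Since the unoriented version of the rooted in-branching $\mathcal{G}_\tau$ is a spanning tree on $n$ vertices, $E(\mathcal{G}_\tau)^{T}\in\mathbb{R}^{(n-1)\times n}$ has full row rank $n-1$, and because every column of $E(\mathcal{G}_\tau)$ sums to zero we have $\mathbf{1}_n^{T}E(\mathcal{G}_\tau)=\mathbf{0}$, equivalently $E(\mathcal{G}_\tau)^{T}\mathbf{1}_n=\mathbf{0}$ and $\mathcal{N}(E(\mathcal{G}_\tau)^{T})=\mathrm{span}(\mathbf{1}_n)$. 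Hence $\mathbf{1}_n$ is orthogonal to, and therefore not contained in, the row space $\mathcal{R}(E(\mathcal{G}_\tau))$ of $E(\mathcal{G}_\tau)^{T}$, so the stacked matrix
\[
S=\begin{bmatrix} E(\mathcal{G}_\tau)^{T}\\ \mathbf{1}_n^{T}\end{bmatrix}\in\mathbb{R}^{n\times n}
\]
is nonsingular. This is exactly the coordinate change that splits the $n-1$ tree-edge differences of the node states from their sum.

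Next I would determine $S^{-1}$ explicitly in block form $S^{-1}=\big[\,P\ \ q\,\big]$, with $P\in\mathbb{R}^{n\times(n-1)}$ and $q\in\mathbb{R}^{n}$. Imposing $SS^{-1}=I_n$ gives four block identities: $E(\mathcal{G}_\tau)^{T}P=I_{n-1}$, $E(\mathcal{G}_\tau)^{T}q=\mathbf{0}$, $\mathbf{1}_n^{T}P=\mathbf{0}^{T}$, and $\mathbf{1}_n^{T}q=1$. The second and fourth force $q=\tfrac{1}{n}\mathbf{1}_n$, while $P$ can be taken to be the right inverse $E(\mathcal{G}_\tau)\big(E(\mathcal{G}_\tau)^{T}E(\mathcal{G}_\tau)\big)^{-1}$ already introduced in Section~\ref{sec:fact}: direct substitution gives $E(\mathcal{G}_\tau)^{T}P=I_{n-1}$, and $\mathbf{1}_n^{T}P=\mathbf{1}_n^{T}E(\mathcal{G}_\tau)\big(E(\mathcal{G}_\tau)^{T}E(\mathcal{G}_\tau)\big)^{-1}=\mathbf{0}^{T}$ again by $\mathbf{1}_n^{T}E(\mathcal{G}_\tau)=\mathbf{0}$.

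Finally I would substitute $\bar{L}_g=\mathcal{A}(\mathcal{G})WE(\mathcal{G})^{T}=\mathcal{A}(\mathcal{G})WR^{T}E(\mathcal{G}_\tau)^{T}$ and compute $S\bar{L}_g S^{-1}$ block by block. Forming $S\bar{L}_g$ produces the two row blocks $E(\mathcal{G}_\tau)^{T}\mathcal{A}(\mathcal{G})WR^{T}E(\mathcal{G}_\tau)^{T}$ and $\mathbf{1}_n^{T}\mathcal{A}(\mathcal{G})WR^{T}E(\mathcal{G}_\tau)^{T}$; right-multiplying by $P$ collapses $E(\mathcal{G}_\tau)^{T}E(\mathcal{G}_\tau)\big(E(\mathcal{G}_\tau)^{T}E(\mathcal{G}_\tau)\big)^{-1}=I_{n-1}$, leaving $E(\mathcal{G}_\tau)^{T}\mathcal{A}(\mathcal{G})WR^{T}$ and $\mathbf{1}_n^{T}\mathcal{A}(\mathcal{G})WR^{T}$ in the two left blocks, and right-multiplying by $q=\tfrac1n\mathbf{1}_n$ annihilates both right blocks because $E(\mathcal{G}_\tau)^{T}\mathbf{1}_n=\mathbf{0}$. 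This is precisely the asserted matrix. The computation is essentially bookkeeping; the only point needing care is checking that $S$ is invertible and that the right inverse of $E(\mathcal{G}_\tau)^{T}$ is consistent with the appended row $\mathbf{1}_n^{T}$, but both reduce to the single identity $\mathbf{1}_n^{T}E(\mathcal{G}_\tau)=\mathbf{0}$, which also powers every cancellation in the last step. (Positivity of the weights is not actually used here; it is inherited from the standing assumptions.)
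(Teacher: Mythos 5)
Your proof is correct and follows essentially the same route as the paper: the paper conjugates $\bar{L}_g$ by exactly the pair of matrices you construct (its $S^{-1}=[E(\mathcal{G}_\tau)~\mathbf{1}_n]^{T}$ and $S=[E(\mathcal{G}_\tau)(E(\mathcal{G}_\tau)^{T}E(\mathcal{G}_\tau))^{-1}~\tfrac{1}{n}\mathbf{1}_n]$ are your $S$ and $S^{-1}$ with the names swapped), using the same factorization $E(\mathcal{G})=E(\mathcal{G}_\tau)R$ and the identity $E(\mathcal{G}_\tau)^{T}\mathbf{1}_n=\mathbf{0}$. You merely supply more of the bookkeeping (invertibility of the stacked matrix and the four block identities) than the paper writes out.
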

\begin{proof} Consider the matrices $S^{-1}=[E(\mathcal{G_\tau})~\mathbf{1_n}]^T$ and $S=[E(\mathcal{G_\tau})(E(\mathcal{G_\tau})^{T}E(\mathcal{G_\tau}))^{-1}~ \frac{1}{n}\mathbf{1_n}]$. Now $S^{-1}\bar{L}_g S=\left[\begin{array}{cc}
E(\mathcal{G_\tau})^{T}\mathcal{A}(\mathcal{G})WR^{T} & \mathbf{0_{n-1}}\\
\mathbf{1_n}^T\mathcal{A}(\mathcal{G})WR^{T} & 0 \end{array} \right]$, using \eqref{inci}-\eqref{cite}.
\end{proof}
\begin{cor} If the digraph in Lemma 8 had exactly one globally reachable node then the factorisation in \eqref{inc}-\eqref{inci1} would hold and the graph Laplacian $\bar{L}_g$ is similar to 
$$\left[\begin{array}{cc}
E(\mathcal{G_\tau})^{T}\mathcal{A}(\mathcal{G_\tau})\tilde{R}WR^{T} & \mathbf{0_{n-1}}\\
\mathbf{1_n}^T\mathcal{A}(\mathcal{G_\tau})\tilde{R}WR^{T} & 0 \end{array} \right].$$
\end{cor}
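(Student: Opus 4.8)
The plan is to deduce this Corollary from Lemma 8 by a single substitution, so the only thing really requiring an argument is that the extra hypothesis legitimises the factorisation \eqref{inc}--\eqref{inci1}.

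First I would justify that factorisation. The unique globally reachable node is the root of the rooted in-branching $\mathcal{G}_\tau$, say node $n$, and I would observe that it must have out-degree $0$ in $\mathcal{G}$: an outgoing edge $(n,u)$ would make $u\neq n$ reachable from every node that reaches $n$, i.e.\ from all of $V$, so $u$ would be a second globally reachable node, contradicting uniqueness. Consequently no edge of $\mathcal{E}_c$ leaves node $n$, the $n$-th row of $\mathcal{A}(\mathcal{G})$ vanishes, and --- since with the labelling of Section~\ref{sec:fact} the matrix $\mathcal{A}(\mathcal{G_\tau})$ equals $[\,I_{n-1}~\mathbf{0_{n-1}}\,]^T$ and hence has full column rank --- every column of $\mathcal{A}(\mathcal{G})$ lies in $\mathcal{R}(\mathcal{A}(\mathcal{G_\tau}))$. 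This yields $\mathcal{A}(\mathcal{G})=\mathcal{A}(\mathcal{G_\tau})\tilde R$ with $\tilde R=[\,I_{n-1}~\tilde T_\tau\,]$ and $\tilde T_\tau$ the left-inverse expression \eqref{inci1}, which is precisely \eqref{inc}--\eqref{inci1}. (Alternatively, one may simply cite the discussion preceding \eqref{inc}, where this factorisation is recorded for digraphs with a single globally reachable node.)

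Next I would apply Lemma 8 verbatim: with the same change of basis $S^{-1}=[\,E(\mathcal{G_\tau})~\mathbf{1_n}\,]^T$, $S=[\,E(\mathcal{G_\tau})(E(\mathcal{G_\tau})^{T}E(\mathcal{G_\tau}))^{-1}~\tfrac{1}{n}\mathbf{1_n}\,]$, the matrix $\bar L_g$ is similar to the block-triangular matrix whose $(1,1)$ and $(2,1)$ blocks are $E(\mathcal{G_\tau})^{T}\mathcal{A}(\mathcal{G})WR^{T}$ and $\mathbf{1_n}^T\mathcal{A}(\mathcal{G})WR^{T}$, the $(1,2)$ and $(2,2)$ blocks being $\mathbf{0_{n-1}}$ and $0$. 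Substituting $\mathcal{A}(\mathcal{G})=\mathcal{A}(\mathcal{G_\tau})\tilde R$ into the two nonzero blocks turns them into $E(\mathcal{G_\tau})^{T}\mathcal{A}(\mathcal{G_\tau})\tilde R WR^{T}$ and $\mathbf{1_n}^T\mathcal{A}(\mathcal{G_\tau})\tilde R WR^{T}$, which is exactly the asserted matrix, so the Corollary follows. I do not expect a genuine obstacle: all the substantive work --- invertibility of $S$ and the block-triangularisation via \eqref{inci}--\eqref{cite} --- already lives in Lemma 8, and the one new ingredient is the elementary out-degree-$0$ remark for the root that secures \eqref{inc}--\eqref{inci1}; everything else is a direct substitution.
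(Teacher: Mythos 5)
Your proposal is correct and follows the same route the paper intends: the paper gives no separate proof of this corollary, treating it as an immediate consequence of Lemma 8 once the factorisation \eqref{inc}--\eqref{inci1} is available, and your substitution of $\mathcal{A}(\mathcal{G})=\mathcal{A}(\mathcal{G_\tau})\tilde R$ into the two nonzero blocks is exactly that step. Your added remark that the unique globally reachable node must have out-degree zero (so that every column of $\mathcal{A}(\mathcal{G})$ lies in $\mathcal{R}(\mathcal{A}(\mathcal{G_\tau}))$) correctly supplies the justification the paper leaves implicit.
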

\begin{lemma}
In a weighted digraph with positive weights and rooted in-branching, the edge Laplacian $\bar{L}_e$ is similar to 
$$\left[\begin{array}{cc}
E(\mathcal{G_\tau})^{T}\mathcal{A}(\mathcal{G})WR^{T} & E(\mathcal{G_\tau})^{T}\mathcal{A}(\mathcal{G})WN_{\tau}\\
0_{(m-n+1)\times (n-1)} & 0_{(m-n+1)\times (m-n+1)} \end{array} \right],$$
where, the columns of the matrix $N_{\tau}\in \mathbb{R}^{m \times (m-n+1)}$ form the orthonormal basis for $\mathcal{N}(R)$.
\end{lemma}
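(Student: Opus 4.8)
The plan is to produce an explicit similarity transformation that block–triangularizes $\bar{L}_e$. The starting point is the factorization $E(\mathcal{G})=E(\mathcal{G_\tau})R$ from \eqref{inci}, which gives $\bar{L}_e=E(\mathcal{G})^T\mathcal{A}(\mathcal{G})W=R^TE(\mathcal{G_\tau})^T\mathcal{A}(\mathcal{G})W$. The factor $R^T$ appearing on the left, together with the fact that $R=[I_{n-1}~T_\tau]$ has full row rank $n-1$, suggests writing $\mathbb{R}^m=\mathcal{R}(R^T)\oplus\mathcal{N}(R)$ and reading $\bar{L}_e$ off in a basis adapted to this splitting.

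Concretely, I would take the change-of-basis matrix $P=[\,R^T~~N_\tau\,]\in\mathbb{R}^{m\times m}$, whose first $n-1$ columns span $\mathcal{R}(R^T)$ and whose last $m-n+1$ columns are the given orthonormal basis of $\mathcal{N}(R)$. Since $\mathcal{N}(R)=\mathcal{R}(R^T)^\perp$ and these subspaces have complementary dimensions $n-1$ and $m-n+1$ (using $\mathbf{rank}(R)=n-1$), the columns of $P$ form a basis of $\mathbb{R}^m$, so $P$ is invertible. The key bookkeeping step is to verify the explicit inverse $P^{-1}=\begin{bmatrix}(RR^T)^{-1}R\\ N_\tau^T\end{bmatrix}$: here $RR^T$ is invertible because $R$ has full row rank, $RN_\tau=0$ because the columns of $N_\tau$ lie in $\mathcal{N}(R)$, and $N_\tau^TN_\tau=I_{m-n+1}$ by orthonormality, so multiplying the two candidate factors gives $P^{-1}P=I_m$.

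It then remains to compute $P^{-1}\bar{L}_eP$. Using $\bar{L}_e=R^TE(\mathcal{G_\tau})^T\mathcal{A}(\mathcal{G})W$, one first finds $P^{-1}R^T=\begin{bmatrix}(RR^T)^{-1}RR^T\\ N_\tau^TR^T\end{bmatrix}=\begin{bmatrix}I_{n-1}\\ 0\end{bmatrix}$, hence $P^{-1}\bar{L}_e=\begin{bmatrix}E(\mathcal{G_\tau})^T\mathcal{A}(\mathcal{G})W\\ 0\end{bmatrix}$. Right–multiplying by $P=[\,R^T~~N_\tau\,]$ yields the $2\times2$ block matrix whose $(1,1)$ block is $E(\mathcal{G_\tau})^T\mathcal{A}(\mathcal{G})WR^T$, whose $(1,2)$ block is $E(\mathcal{G_\tau})^T\mathcal{A}(\mathcal{G})WN_\tau$, and whose entire second block row vanishes — exactly the claimed form, with the block dimensions as advertised.

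There is essentially no serious obstacle here; the only things one must get right are the choice of $P$ — in particular using $R^T$ itself, rather than a normalized right inverse of $R$, so that the $(1,1)$ block comes out precisely as $E(\mathcal{G_\tau})^T\mathcal{A}(\mathcal{G})WR^T$ — and the verification of $P^{-1}$, which relies only on $RR^T$ being invertible and on $N_\tau$ having orthonormal columns annihilated by $R$. Note also that this argument never uses the single-globally-reachable-node factorization \eqref{inc}–\eqref{inci1}, so it applies verbatim whenever $\mathcal{G}$ has a rooted in-branching.
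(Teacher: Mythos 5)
Your proposal is correct and uses exactly the same similarity transformation as the paper: $V=[\,R^T~~N_\tau\,]$ with $V^{-1}=\bigl[\,((RR^T)^{-1}R)^T~~N_\tau\,\bigr]^T$, followed by the same block computation. You simply spell out the verification of $V^{-1}V=I_m$ (via $RN_\tau=0$ and $N_\tau^TN_\tau=I$) more explicitly than the paper does.
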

\begin{proof} The matrix $R^{T} \in \mathbb{R}^{m \times (n-1)}$ has full column rank and so the left inverse $(RR^T)^{-1}R$ exists. Consider $V^{-1}=[\left((RR^T)^{-1}R\right)^T~ N_{\tau}]^T$ and $V=[R^T~N_{\tau}]$. Now, $V^{-1}\bar{L}_{e}V=$ $\left[\begin{array}{cc}
E(\mathcal{G_\tau})^{T}\mathcal{A}(\mathcal{G})WR^{T} & E(\mathcal{G_\tau})^{T}\mathcal{A}(\mathcal{G})WN_{\tau}\\
0_{(m-n+1)\times (n-1)} & 0_{(m-n+1)\times (m-n+1)} \end{array} \right]$. 
\end{proof}
\begin{cor}If the digraph in Lemma 9 had exactly one globally reachable node then the factorisation in \eqref{inc}-\eqref{inci1} would hold and the edge Laplacian $\bar{L}_e$ would be similar to 
$$\left[\begin{array}{cc}
E(\mathcal{G_\tau})^{T}\mathcal{A}(\mathcal{G_\tau})\tilde{R}WR^{T} & E(\mathcal{G_\tau})^{T}\mathcal{A}(\mathcal{G_\tau})\tilde{R}WN_{\tau}\\
0_{(m-n+1)\times (n-1)} & 0_{(m-n+1)\times (m-n+1)} \end{array} \right].$$
\end{cor}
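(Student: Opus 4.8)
The plan is to read off this corollary directly from Lemma 9 by substituting the special factorisation of $\mathcal{A}(\mathcal{G})$ that becomes available when there is a single globally reachable node; almost all the work is already contained in Lemma 9, so the only genuine task is to justify that \eqref{inc}--\eqref{inci1} apply here and then to carry out a one-line substitution into the two nonzero blocks.

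First I would argue that, under the hypothesis, the unique globally reachable node has out-degree $0$ in $\mathcal{G}$ (not merely in $\mathcal{G}_\tau$). If it had an outgoing edge to some other node $k$, then, being reachable from $k$, it would lie on a directed cycle through $k$; but then $k$ is reached from every node (each node reaches the globally reachable node, which in turn reaches $k$ along that edge), so $k$ would also be globally reachable, contradicting uniqueness. Hence this node can be labelled $n$, every edge of $\mathcal{G}$ has its tail in $\{1,\dots,n-1\}$, and so every column of $\mathcal{A}(\mathcal{G})$ is a standard basis vector of $\mathbb{R}^n$ with index at most $n-1$. Since the $n-1$ tree edges in $\mathcal{E}_\tau$ have the distinct parent nodes $1,\dots,n-1$, the matrix $\mathcal{A}(\mathcal{G}_\tau)\in\mathbb{R}^{n\times(n-1)}$ has exactly these basis vectors as its columns; in particular it has full column rank, so $(\mathcal{A}(\mathcal{G}_\tau)^{T}\mathcal{A}(\mathcal{G}_\tau))^{-1}$ exists and $\tilde T_\tau$ in \eqref{inci1} is well defined, and every column of $\mathcal{A}(\mathcal{G}_c)$ equals the $\mathcal{A}(\mathcal{G}_\tau)$-column of its sibling in $\mathcal{E}_\tau$. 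This is exactly the statement $\mathcal{A}(\mathcal{G})=\mathcal{A}(\mathcal{G}_\tau)[I_{n-1}~\tilde T_\tau]=\mathcal{A}(\mathcal{G}_\tau)\tilde R$, i.e.\ \eqref{inc}--\eqref{inci1} hold.

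With \eqref{inc}--\eqref{inci1} established I would invoke Lemma 9 unchanged: the change of basis $V=[R^{T}~N_\tau]$, $V^{-1}=\big[((RR^{T})^{-1}R)^{T}~N_\tau\big]^{T}$ does not involve $\mathcal{A}$ at all, and it already puts $\bar L_e$ in the block upper-triangular form whose nonzero blocks are $E(\mathcal{G}_\tau)^{T}\mathcal{A}(\mathcal{G})WR^{T}$ and $E(\mathcal{G}_\tau)^{T}\mathcal{A}(\mathcal{G})WN_\tau$. Replacing $\mathcal{A}(\mathcal{G})$ by $\mathcal{A}(\mathcal{G}_\tau)\tilde R$ in both blocks yields precisely the asserted matrix, and similarity is preserved because the transformation is the same. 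The only step that requires any care is thus the first one --- specifically the observation that a unique globally reachable node forces zero out-degree and hence $\mathbf{1}_n\notin\mathcal{R}(\mathcal{A})$ (cf.\ Lemmas 1 and 4), which is what makes the $\mathcal{A}(\mathcal{G}_\tau)$ factorisation legitimate; everything after that is bookkeeping and needs no new spectral or algebraic input.
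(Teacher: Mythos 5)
Your proposal is correct and follows essentially the same route the paper intends: the paper states this corollary without proof, relying on the factorisation $\mathcal{A}(\mathcal{G})=\mathcal{A}(\mathcal{G}_\tau)\tilde{R}$ asserted in \eqref{inc}--\eqref{inci1} for a single globally reachable node, and on direct substitution into the similarity transformation $V=[R^{T}~N_\tau]$ of Lemma 9, which indeed does not involve $\mathcal{A}$. Your additional argument that uniqueness of the globally reachable node forces its out-degree to be zero (so that $\mathcal{A}(\mathcal{G}_\tau)$ has full column rank and every column of $\mathcal{A}(\mathcal{G}_c)$ duplicates that of its sibling) is a useful justification that the paper leaves implicit.
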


From Lemma 7 and Corollaries 1-2, the matrix $E(\mathcal{G_\tau})^{T}\mathcal{A}(\mathcal{G_\tau})\tilde{R}WR^{T}$, for $\mathcal{G}$ (with positive weights), is invertible if it has a rooted in-branching with exactly one globally reachable node. Similarly, for multiple globally reachable nodes, $E(\mathcal{G_\tau})^{T}\mathcal{A}(\mathcal{G})WR^{T}$ is invertible. Furthermore, from Lemmas 5-9, the eigenvalues of both these matrices are in the open right half plane.
\section{Robust stability of Uncertain Directed Consensus}\label{sec:robust}

Consensus dynamics over a weighted digraph is driven by 
\begin{align}
\label{con}
\dot{x}=-\bar{L}_g x,
\end{align}
where, $x \in \mathbb{R}^{n}$ denotes the node states. Pre-multiplying both sides by $E(\mathcal{G})^T$, yields $\dot{x}_e=-\bar{L}_e x_e$ where, $x_e=E(\mathcal{G})^T x=R^T E(\mathcal{G}_{\tau})^T x\in \mathbb{R}^{m}$ denotes the edge states. Choosing a suitable transformation $z=V^{-1}x_e$, it turns out that $z=[\left((RR^T)^{-1}R\right)^T~ N_{\tau}]^T R^T E(\mathcal{G}_{\tau})^T x=[x^{T}E(\mathcal{G}_{\tau})~ \mathbf{0}_{m-n+1}^T]^T$. Thus, the first $n-1$ components of $z$ represent the edge states of the rooted in-branching.
Lemma 9 suggests that it is sufficient to concentrate on the dynamics of the edges in the rooted in-branching, say $x_{\tau}$, given by 
\begin{align}
\label{edge}
\dot{x}_{\tau}= -E(\mathcal{G_\tau})^{T}\mathcal{A}(\mathcal{G})WR^{T}x_{\tau}.
\end{align}
\par{}The notion of uncertainty is now introduced through the edge weights. The perturbations are real and are bounded about some nominal positive value. 
For this work, only additive uncertainty on a single edge weight is considered and so the weight on one of the $m$ edges is perturbed. 
This uncertainty on any edge weight $w_i$, expressed as $\delta_i$, is given by $|\delta_{i}|<\bar{\delta}, \forall i$. The uncertainty set is thus 
\begin{align}
\label{unc}
\mathbf{\Delta}=\{\Delta:\Delta=\bf \delta_i, |\delta_i| \leq\bar{\delta}< \infty\}.
\end{align}
The uncertain edge agreement protocol is
\begin{align}
\label{diag}
\dot{x}_{\tau}= -E(\mathcal{G_\tau})^{T}\mathcal{A}(\mathcal{G})(W+P_i\Delta P_i^{T})R^{T}x_{\tau},
\end{align}
with the uncertainties belonging to the set given by \eqref{unc} and $P_i\in \mathbb{R}^{m}$ is the $i$-th standard basis in $\mathbb{R}^{m}$ if the weight on edge $e_i$ is considered uncertain.
\subsection{Nyquist Stability Analysis} \label{sec:Nyq}
\begin{figure}[t]
     \centering
      \includegraphics[scale=0.3740]{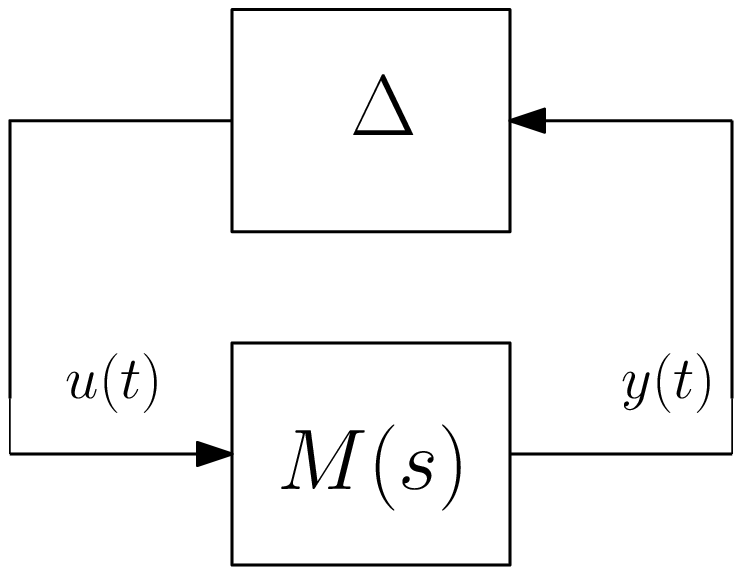}
      \caption{Uncertain consensus protocol}
      \label{UC}
      \vspace*{-0.5cm}
\end{figure}

The uncertain system, described by \eqref{diag}, is transformed in such a way that the uncertainty is separated from the nominal plant as illustrated in Fig. \ref{UC}. This formulation lends itself to a stability analysis using the Nyquist criterion. 
Consider $u$ and $y$ as the input and output, respectively, of the plant while 
the overall system is described by 
\begin{align}
\dot{x}_{\tau}&= -E(\mathcal{G_\tau})^{T}\mathcal{A}(\mathcal{G})WR^{T}x_{\tau}-E(\mathcal{G_\tau})^{T}\mathcal{A}(\mathcal{G})P_i u \label{add}\\
y&= P_i^{T}R^{T}x_{\tau},~~
u=\Delta P_i^{T}R^{T}x_{\tau}.
\end{align}
The transfer function, $M(s)$, between $y(s)$ and $u(s)$ is:
\begin{align}
\label{sys}
\hspace{-0.32cm}M(s)\hspace{-0.1cm}=-P_i^{T}R^{T}[sI+E(\mathcal{G_\tau})^{T}\mathcal{A}WR^{T}]^{-1} E(\mathcal{G_\tau})^{T}\mathcal{A}P_i.
\end{align}
The single-input single-output transfer function $M(s)$ does not have any pole at the origin because the system matrix in \eqref{add} is of full rank. The scalar uncertainty $\Delta$ can be analysed using a classical Nyquist based approach. 
\begin{theorem} The consensus protocol, \eqref{con}, over a weighted digraph $\mathcal{G}$ (with positive weights) having a rooted in-branching, is robustly stable to all perturbations $\delta_i$ on a single edge weight $w_i$, satisfying
\begin{align}
\label{rem}
|\delta_i|<GM[M(s)],
\end{align}
where $GM$ denotes the gain margin for a transfer function.
\end{theorem}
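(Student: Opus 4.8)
The plan is to reduce ``robust stability of the consensus protocol \eqref{con}'' to internal asymptotic stability of the reduced, perturbed edge dynamics \eqref{diag}, and then to certify that stability by a single-loop Nyquist argument on $M(s)$.

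First I would collect the structural facts that make \eqref{diag} amenable to this. By Lemma 9 the edge Laplacian $\bar{L}_e$ is similar to a block-triangular matrix whose only nonzero block is $E(\mathcal{G_\tau})^{T}\mathcal{A}WR^{T}$, so the spectrum of this block is exactly the set of nonzero eigenvalues of $\bar{L}_e$, which by Lemma 6 coincides with the nonzero eigenvalues of $\bar{L}_g$, which by Lemma 5 lie in the open right half plane; moreover Lemma 7 forces the zero eigenvalue to be absorbed entirely by the trivial blocks, so $E(\mathcal{G_\tau})^{T}\mathcal{A}WR^{T}$ has no eigenvalue at the origin or on the imaginary axis. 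Hence $A:=-E(\mathcal{G_\tau})^{T}\mathcal{A}WR^{T}$ is Hurwitz and invertible, and $M(s)$ in \eqref{sys} is a strictly proper SISO transfer function with no poles in the closed right half plane (in particular none at the origin, the ``full rank'' remark after \eqref{sys}); its Nyquist locus $M(j\omega)$ is bounded and tends to $0$ as $|\omega|\to\infty$. I would also record that, since $x_e=R^{T}E(\mathcal{G_\tau})^{T}x\in\mathcal{R}(R^{T})$ by \eqref{inci}, consensus of \eqref{con} is equivalent to $x_\tau\to 0$, so that ``$\mathcal{G}$ achieves robust consensus'' means precisely ``\eqref{diag} is asymptotically stable for every admissible $\delta_i$.''

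Next I would set up the $M$--$\Delta$ interconnection already written in \eqref{add}: with $A$ as above, $B=-E(\mathcal{G_\tau})^{T}\mathcal{A}P_i$ and $C=P_i^{T}R^{T}$, the perturbed closed-loop matrix is $A+B\delta_i C=-E(\mathcal{G_\tau})^{T}\mathcal{A}(W+P_i\Delta P_i^{T})R^{T}$, exactly the system matrix of \eqref{diag}. The determinant identity $\det(sI-A-B\delta_iC)=\det(sI-A)\,\bigl(1-\delta_i M(s)\bigr)$, together with the fact that $\det(sI-A)$ has all its roots in the open left half plane, shows that the closed-loop poles in the closed right half plane are exactly the roots there of the return difference $F(s)=1-\delta_i M(s)$ (well-posedness being automatic because $M$ is strictly proper). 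Applying the argument principle to $F$ along the standard Nyquist contour, and using that $M$ (hence $F$) has no poles in the closed right half plane and that $F$ has no imaginary-axis zeros exactly when $\delta_i M(j\omega)\neq 1$ for all $\omega$, one obtains: \eqref{diag} is asymptotically stable if and only if the Nyquist locus of $\delta_i M(j\omega)$ neither passes through nor encircles the point $1$.

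Finally, I would exploit that this locus is just the real scaling by $\delta_i$ of the fixed locus of $M$. At $\delta_i=0$ it is the single point $0$, with zero encirclements of $1$; by homotopy in $t\delta_i$, $t\in[0,1]$, the winding number about $1$ can change only once the scaled locus touches $1$, which can happen only at a real-axis crossing of the locus of $M$ and then with $|\delta_i|$ equal to the reciprocal of the magnitude of $M$ at that crossing --- that is, when $|\delta_i|$ reaches $GM[M(s)]$. Consequently, for every $\delta_i$ with $|\delta_i|<GM[M(s)]$ the locus of $\delta_i M(j\omega)$ stays bounded away from $1$ and the winding number remains $0$, so \eqref{diag} --- and hence the consensus protocol \eqref{con} --- is stable, which is the claim; the same computation also shows the bound is tight. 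The main obstacle I anticipate is purely bookkeeping: ruling out imaginary-axis poles and zeros of $F$ so the Nyquist criterion applies cleanly, and treating positive and negative $\delta_i$ (and possibly several real-axis crossings of the $M$-locus) uniformly, so that $GM[M(s)]$ is correctly identified with the infimum over those crossings of the reciprocal distance to the origin and the constancy-of-winding-number argument is valid on the entire interval $|\delta_i|<GM[M(s)]$.
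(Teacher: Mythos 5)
Your proof is correct and follows the same route as the paper: isolate the scalar uncertainty in an $M$--$\Delta$ loop and apply the Nyquist criterion to the SISO transfer function $M(s)$, with the robustness bound given by its gain margin. The paper's own proof is only a two-line appeal to the Nyquist criterion at the phase-crossover frequency, so your argument supplies exactly the justification it leaves implicit --- Hurwitzness of the reduced nominal matrix via Lemmas 5--7 and 9, the return-difference determinant identity, and the winding-number/homotopy step --- including the point the paper glosses over, namely that both positive and negative real-axis crossings of $M(j\omega)$ must be accounted for since $\delta_i$ may take either sign.
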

\begin{proof}
Since the transfer function $M(j\omega)$ in \eqref{sys}, as depicted in Fig. \ref{UC}, has no pole at the origin, the gain margin is obtained by computing \eqref{sys} at $s=j\omega_{pc}$ (which is the phase crossover frequency). 
Now, from the Nyquist criterion, stability dictates that $|\delta_i|<1/|M(j\omega_{pc})| $.
\end{proof}

Two special digraphs are considered next: the directed acyclic graph, having one globally reachable node, and a directed cycle graph where every node is globally reachable. 

\begin{cor}
If the digraph in Theorem 1 is acyclic, the factorization in \eqref{inc}-\eqref{inci1} holds and the limit on the perturbation on an edge, $e_i$, is given by:
\begin{align}
\label{theo}
|\delta_i|<|\left(P_i^{T}R^{T}(\tilde{R}WR^{T})^{-1}\tilde{R}P_i\right)^{-1}|
\end{align}
\end{cor}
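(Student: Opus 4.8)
The plan is to specialize Theorem~1, whose bound reads $|\delta_i| < GM[M(s)] = 1/|M(j\omega_{pc})|$ with $M(s)$ given by \eqref{sys}, and to show that in the acyclic case it collapses to $1/|M(0)|$, which in turn rewrites as the right-hand side of \eqref{theo}.

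First I would record the structural facts. An acyclic digraph with a rooted in-branching has a globally reachable node, and by the contrapositive of Lemma~4 it has exactly one, since two or more would close a directed cycle; hence the factorization $\mathcal{A}(\mathcal{G}) = \mathcal{A}(\mathcal{G_\tau})\tilde{R}$ of \eqref{inc}--\eqref{inci1} is available and, by the corollary following Lemma~9, the matrix $F := E(\mathcal{G_\tau})^{T}\mathcal{A}(\mathcal{G_\tau})\tilde{R}WR^{T}$ is invertible with spectrum in the open right half plane; this $F$ is precisely the matrix inverted in \eqref{sys}. Two refinements are particular to acyclicity. Labelling the nodes in a reverse topological order (root last, every arc increasing the node index), $\mathcal{A}(\mathcal{G_\tau})$ has distinct standard-basis columns, so $G_\tau := E(\mathcal{G_\tau})^{T}\mathcal{A}(\mathcal{G_\tau}) = I_{n-1} - N$ with $N$ strictly upper triangular, hence invertible; and in the same ordering $\bar L_g$, and therefore $F = G_\tau(\tilde{R}WR^{T})$, is upper triangular with the non-root weighted out-degrees $d_1,\dots,d_{n-1}$ on the diagonal, so every eigenvalue of $F$ is real and positive.

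Second, the DC-gain computation, which is routine once the above is in place: from \eqref{sys}, $M(0) = -P_i^{T}R^{T}F^{-1}E(\mathcal{G_\tau})^{T}\mathcal{A}(\mathcal{G})P_i$; substituting $\mathcal{A}(\mathcal{G}) = \mathcal{A}(\mathcal{G_\tau})\tilde{R}$ gives $E(\mathcal{G_\tau})^{T}\mathcal{A}(\mathcal{G})P_i = G_\tau\tilde{R}P_i$ and $F = G_\tau(\tilde{R}WR^{T})$, and invertibility of $G_\tau$ then yields $M(0) = -P_i^{T}R^{T}(\tilde{R}WR^{T})^{-1}\tilde{R}P_i$, i.e.\ the quantity in \eqref{theo} up to sign. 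It therefore remains to prove $GM[M(s)] = 1/|M(0)|$, i.e.\ that the phase-crossover frequency is $\omega_{pc}=0$ --- and this is where I expect the work to lie.

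For the phase-crossover step, let $a_i$ be the parent node of the uncertain edge $e_i$. Because each column of $\tilde{T}_\tau$ simply replicates the column of the sibling tree edge, $\tilde{R}P_i = \delta_{a_i}$, so the input vector of $M(s)$ is $G_\tau\delta_{a_i}$, which --- $N$ being strictly upper triangular --- is supported only on indices $\le a_i$ with a $1$ in position $a_i$. Since $sI+F$ is upper triangular with diagonal $s+d_k$, back-substitution in $(sI+F)v = G_\tau\delta_{a_i}$ forces $v_k=0$ for $k>a_i$ and $v_{a_i}=1/(s+d_{a_i})$. On the other side, $P_i^{T}R^{T}$ is the $i$-th column $r_i$ of $R$; for a directed acyclic graph this signed-path vector is supported on indices $\ge a_i$ and has $+1$ in position $a_i$, because a signed path for a co-tree edge $a_i\to b_i$ must leave $a_i$ through the tree edge with parent $a_i$ (the alternative would create a directed cycle together with $e_i$, contradicting acyclicity). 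Hence $M(s) = -r_i^{T}v = -1/(s+d_{a_i})$, a first-order lag whose phase equals $\pi$ only at $\omega=0$; thus $\omega_{pc}=0$ and $GM[M(s)] = 1/|M(0)| = d_{a_i}$. Combined with the second step, $|\delta_i| < 1/|M(0)| = |(P_i^{T}R^{T}(\tilde{R}WR^{T})^{-1}\tilde{R}P_i)^{-1}|$, which is \eqref{theo}; and since $d_{a_i}$ equals $w_i$ plus the nominal weights of the sibling edges of $e_i$, this also recovers the graph-theoretic reading announced in the introduction. The main obstacle is exactly this collapse of $M(s)$ to a single first-order lag: it hinges on acyclicity through the topological ordering, the triangular real-positive structure of $F$, and the impossibility of a cycle-closing signed path; drop acyclicity and $M(s)$ is only a generic stable strictly proper transfer function, for which $\omega_{pc}$ can be nonzero.
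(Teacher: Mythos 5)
Your proof is correct, but it takes a genuinely different and substantially more complete route than the paper's. The paper's proof of Corollary~3 is a one-line substitution: it inserts the factorization $(E(\mathcal{G_\tau})^{T}\mathcal{A}(\mathcal{G_\tau})\tilde{R}WR^{T})^{-1}=(\tilde{R}WR^{T})^{-1}(E(\mathcal{G_\tau})^{T}\mathcal{A}(\mathcal{G_\tau}))^{-1}$ into \eqref{rem}, which is exactly your ``DC-gain'' step, and it silently assumes that the gain margin is attained at $\omega_{pc}=0$; that assumption is never justified for the acyclic case (the paper only asserts it later, for the cycle graph). Your third step closes precisely that gap: by working in a reverse topological order you show $G_\tau=I-N$ with $N$ strictly upper triangular, $\tilde RWR^T$ upper triangular with the weighted out-degrees on the diagonal (hence $F$ too), and then the complementary supports of $G_\tau\delta_{a_i}$ (indices $\le a_i$) and $r_i$ (indices $\ge a_i$, with $+1$ at $a_i$) collapse $M(s)$ to the first-order lag $-1/(s+d_{a_i})$, from which $\omega_{pc}=0$ and $GM=d_{a_i}$ follow. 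This simultaneously recovers, in one stroke, the graph-theoretic reading that the paper obtains separately in Theorem~2 by iterated Sherman--Morrison updates (Lemmas~10--11 and \eqref{update}). Two small points to tighten: the inference ``$\bar L_g$ is upper triangular, therefore $F$ is'' does not follow from the similarity in Lemma~8 alone, so you should (as you essentially do) derive the triangularity of $F$ directly from $F=G_\tau(\tilde RWR^T)$; and the claim that $r_i$ is supported on indices $\ge a_i$ needs the explicit observation that every tree edge on the signed path of $e_i$ has its parent node reachable from $a_i$ or from the head $b_i$, hence of index at least $a_i$ in the topological order --- your stated justification only covers the $+1$ entry at position $a_i$. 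With those two remarks supplied, your argument is a strictly stronger proof than the one in the paper: it validates the phase-crossover assumption the paper leaves implicit, at the cost of requiring the topological-order machinery that the paper avoids by keeping the bound in the abstract form \eqref{theo}.
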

Corollary 3 can be proved by applying to \eqref{rem}, the following 
$$(E(\mathcal{G_\tau})^{T}\mathcal{A}(\mathcal{G_\tau})\tilde{R}WR^{T})^{-1}=(\tilde{R}WR^{T})^{-1}(E(\mathcal{G_\tau})^{T}\mathcal{A}(\mathcal{G_\tau}))^{-1}$$  with $\mathcal{A}=\mathcal{A}(\mathcal{G}_\tau)\tilde{R}$ as in \eqref{inc}-\eqref{inci1}. 
\subsection{Consensus over Uncertain Directed Acyclic Graphs}

For directed acyclic graphs with a rooted in-branching, \eqref{rem} has a significant graph theoretic interpretation. The factorisations of $E$ and $\mathcal{A}$, and the subsequent interpretations of the columns of $R$ and $\tilde{R}$ 
presented in Section \ref{sec:fact}, establish this connection. 
The following results lead to such an interpretation of \eqref{rem} for directed acyclic graphs. 
\begin{lemma}
For a directed acyclic graph $\mathcal{G}$, if $\tilde{r}_i= \tilde{r}_j= q_j,~1\leq j \leq n-1$, then $r_{i}(j)=+1$, where $q_j$ is the $j$-th standard basis for $\mathbb{R}^{n-1}$.
\end{lemma}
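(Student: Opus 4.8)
The plan is to translate the algebraic hypothesis into the language of the in-branching $\mathcal{G}_\tau$ and then invoke the acyclicity of $\mathcal{G}$. First I would dispose of the trivial possibility $i\le n-1$: then $\tilde r_i=q_i$, so $\tilde r_i=q_j$ forces $i=j$, and $r_i(j)=r_j(j)=1$ because the first $n-1$ columns of $R=[I_{n-1}~T_\tau]$ are the standard basis vectors. So assume $e_i\in\mathcal{E}_c$. Since the first $n-1$ columns of $\tilde R$ are likewise standard basis vectors, $\tilde r_j=q_j$ holds automatically, and the equality $\tilde r_i=q_j$ says precisely that $e_i$ is a sibling of $e_j$, i.e.\ $e_i$ and $e_j$ emanate from the same parent node, which by the labelling convention of $\mathcal{G}_\tau$ is node $j$. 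Because $\mathcal{G}_\tau$ is a rooted in-branching and $j$ is not the root, node $j$ has out-degree exactly one in $\mathcal{G}_\tau$, and that unique outgoing edge is $e_j$; write $h_j:=\mathrm{head}(e_j)$, so $e_j$ is oriented $j\to h_j$ and $h_j$ is the tree-parent of $j$.

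Next I would recall, from the factorisation $E(\mathcal{G})=E(\mathcal{G}_\tau)R$, that $r_i$ is the signed path vector of $e_i$. Reading $E(\mathcal{G}_\tau)r_i=\mathbf{1}_{\mathrm{tail}(e_i)}-\mathbf{1}_{\mathrm{head}(e_i)}=\mathbf{1}_j-\mathbf{1}_k$, with $k:=\mathrm{head}(e_i)$, as a telescoping sum over tree edges shows that the support of $r_i$ is exactly the unique simple path in $\mathcal{G}_\tau$ joining $j$ to $k$, with $r_i(\cdot)=+1$ on a tree edge traversed from its tail to its head as one walks from $j$ toward $k$, and $-1$ otherwise. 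Hence it suffices to prove that the first edge of this path is $e_j$, traversed in its indicated direction.

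Here is where acyclicity enters. I claim $k$ is neither $j$ nor a descendant of $j$ in $\mathcal{G}_\tau$: it is not $j$ since $\mathcal{G}$ has no self-loop, and if $k$ were a proper descendant of $j$ then the in-branching already contains a directed path from $k$ up to $j$, which together with the edge $e_i\colon j\to k$ would close a directed cycle in $\mathcal{G}$ --- impossible. Consequently $k$ lies outside the subtree of $\mathcal{G}_\tau$ rooted at $j$. Deleting $e_j$ splits the tree $\mathcal{G}_\tau$ into that subtree (containing $j$) and its complement (containing $k$), so every path from $j$ to $k$ uses $e_j$; and since a simple path leaving $j$ toward a vertex outside $j$'s subtree cannot first descend into that subtree, the very first edge of the path is $e_j$, walked from $j$ to $h_j$, i.e.\ in the direction marked on it in $\mathcal{G}_\tau$. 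Therefore $r_i(j)=+1$, and the path being simple $e_j$ occurs only once. The only real difficulty is this structural step: pinning down, via the in-branching property and acyclicity, that $\mathrm{head}(e_i)$ cannot lie in the subtree hanging below $\mathrm{tail}(e_i)=j$, so that the signed path is forced to begin by climbing $e_j$ --- after which the sign bookkeeping in the definition of the signed path gives $+1$ rather than $-1$.
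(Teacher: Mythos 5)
Your proof is correct and follows essentially the same route as the paper's: both arguments reduce the claim to the observation that, by acyclicity, the head of $e_i$ cannot admit a directed path back to the common parent node $j$, so the signed path in $\mathcal{G}_\tau$ representing $e_i$ is forced to leave $j$ along $e_j$ in its indicated direction. Your version merely makes explicit the subtree-separation argument and the degenerate case $i\le n-1$ that the paper leaves implicit.
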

\begin{proof}Clearly $\tilde{r}_j= q_j,$ for $1\leq j \leq n-1$ follows from \eqref{inci1}. Now, $\tilde{r}_i= \tilde{r}_j$ implies that $i\geq n$ and $e_i \in \mathcal{E}_c$. Since there are no directed cycles in $\mathcal{G}$, so any edge $e_i$ emerging from a node, say node $p$, cannot terminate at a node $t$ such that there is a directed path from node $t$ to $p$. Hence, the equivalent signed path, in $\mathcal{G}_\tau$, corresponding to the edge $e_i$, must traverse its sibling edge $e_j$ in the positive sense. Thus, $r_{i}(j)=+1$. 
\end{proof}
\begin{lemma}
For the directed acyclic graph $\mathcal{G}$ having two edges $e_s$ and $e_t$ in $\mathcal{E}_{c}$ that are siblings to edges $e_p,~e_q \in \mathcal{E}_{\tau}$, respectively, if the signed path, in $\mathcal{G}_{\tau}$, of $e_s$ includes $e_q$, then the signed path, in $\mathcal{G}_{\tau}$, of $e_t$ cannot include the edge $e_p$.
\end{lemma}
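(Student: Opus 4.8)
The plan is to translate the combinatorial condition ``the signed path of an edge of $\mathcal{E}_c$ contains a given tree edge'' into a statement about the subtrees of the in-branching $\mathcal{G}_\tau$, and then to show that assuming the two inclusions simultaneously forces a directed cycle in $\mathcal{G}$, contradicting acyclicity. Throughout I take the sibling tree edges $e_p,e_q\in\mathcal{E}_\tau$ to be distinct, that is, the parent nodes $p$ and $q$ to be distinct, which is the situation the statement concerns (if $p=q$ then $e_p=e_q$ and, as both signed paths necessarily traverse that common sibling edge, the assertion does not apply). Let $h_s$ and $h_t$ be the nodes at which $e_s$ and $e_t$ respectively terminate; since an edge of $\mathcal{E}_c$ shares its tail with its sibling and the sibling of $e_j\in\mathcal{E}_\tau$ emanates from node $j$ by the labelling convention, in $\mathcal{G}$ the edge $e_s$ runs from node $p$ to $h_s$ and $e_t$ runs from node $q$ to $h_t$. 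For a node $v\neq n$ let $T_v$ denote the connected component containing $v$ obtained by deleting the tree edge $e_v$ from $\mathcal{G}_\tau$; equivalently, $T_v$ consists of $v$ together with the nodes from which $v$ is reachable in $\mathcal{G}_\tau$, while the other component contains the root $n$.

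The first step is to record two facts. First, since removing $e_v$ splits $\mathcal{G}_\tau$ into $T_v$ and its complement, the signed path of an edge of $\mathcal{E}_c$ — the unoriented $\mathcal{G}_\tau$-path joining its two endpoints — contains $e_v$ exactly when precisely one of those endpoints lies in $T_v$; thus the hypothesis reads ``exactly one of $p,h_s$ lies in $T_q$'', and the negation of the conclusion reads ``exactly one of $q,h_t$ lies in $T_p$''. Second, acyclicity of $\mathcal{G}$ gives $h_s\notin T_p$ and $h_t\notin T_q$, since otherwise a directed $\mathcal{G}_\tau$-path $h_s\to\cdots\to p$ closed up by $e_s$ (and symmetrically $h_t\to\cdots\to q$ closed up by $e_t$) would be a directed cycle of $\mathcal{G}$ — this is the same acyclicity argument already used in the preceding lemma.

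The second step is a brief case analysis, supposing for contradiction that both inclusions hold. If $p\in T_q$ then, as $p\neq q$, the node $p$ is a proper descendant of $q$, so $T_p\subsetneq T_q$ and $q\notin T_p$; the condition ``exactly one of $q,h_t$ in $T_p$'' then forces $h_t\in T_p\subseteq T_q$, contradicting $h_t\notin T_q$. If instead $p\notin T_q$, the hypothesis forces $h_s\in T_q$; now if $q\in T_p$ then $T_q\subseteq T_p$, so $h_s\in T_p$, contradicting $h_s\notin T_p$, whereas if $q\notin T_p$ the ``exactly one'' condition forces $h_t\in T_p$. In this last sub-case $h_s\in T_q$ yields a directed $\mathcal{G}_\tau$-path $h_s\to\cdots\to q$, which with $e_s$ prepended is a directed walk in $\mathcal{G}$ from $p$ to $q$, and symmetrically $h_t\in T_p$ yields a directed walk in $\mathcal{G}$ from $q$ to $p$; concatenating these gives a closed directed walk of positive length, hence a directed cycle of $\mathcal{G}$, a contradiction. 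Since every case is impossible, the two inclusions cannot both hold, which is the assertion.

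The step I expect to be the main obstacle is this last sub-case, where $p\notin T_q$ and $q\notin T_p$ make the subtrees $T_p$ and $T_q$ disjoint and leave no containment to exploit; the idea there is to stop reasoning about subtrees and instead splice the two edges of $\mathcal{E}_c$ together with the two upward tree paths into an explicit directed cycle. The remaining cases are routine once one uses the tree fact that $p\in T_q$ implies $T_p\subseteq T_q$ together with the acyclicity constraints $h_s\notin T_p$ and $h_t\notin T_q$ from the first step.
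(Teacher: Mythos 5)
Your proof is correct and follows essentially the same route as the paper's: both arguments case-split on the relative position of the parent nodes $p$ and $q$ in the in-branching, use acyclicity to rule out the head of a chord lying in the subtree below its own tail, and dispose of the final case by splicing $e_s$, $e_t$ and the two tree paths into an explicit directed cycle. Your recasting of ``the signed path of $e_s$ contains $e_q$'' as ``exactly one endpoint of $e_s$ lies in the component $T_q$'' is just a sharper formalization of the paper's informal reasoning about signs $r_s(q)=\pm 1$ and paths to the root (and your explicit exclusion of the degenerate case $e_p=e_q$ is a welcome clarification), but it is not a different method.
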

\begin{proof} It suffices to prove that $r_{s}(q)=\pm1$ implies $r_{t}(p)=0$. From Lemma 10, $r_{s}(p)=+1$. Suppose $r_{s}(q)=+1$. This means that there is a directed path through $e_p$ to the globally reachable node with $e_q$ appearing after edge $e_p$ in the sequence. Hence, any edge that is a sibling of $e_q$ (such as $e_t$) in $\mathcal{G}$ cannot be represented by a signed path that contains edge $e_p$ as this will imply the existence of a directed cycle. So $r_{t}(p)=0$.
Next, consider $r_{s}(q)=-1$. This means that a sibling edge of $e_p$ is encoded by a path that involves traversing $e_q$ in the opposite sense. Clearly, the directed path through $e_p$ to the globally reachable node does not include the edge $e_q$ and vice versa. Thus, any sibling edge of $e_q$ cannot be represented by a signed path that involves traversing $e_p$ in the positive sense either. So, $r_{t}(p)$ cannot equal $+1$. Suppose $r_{t}(p)=-1$. But this means that there is a directed path through $e_s$ and $e_t$, back to the parent node of $e_p$ and $e_s$, thereby completing a directed cycle. Thus, $r_{t}(p)=0$ is the only possibility. 
\end{proof}

Eqn. \eqref{rem} suggests that an interpretation of the perturbation bound involves an investigation of the structure of $[\tilde{R}WR^{T}]^{-1}$. Consider the matrix $\tilde{R}WR^{T}=W_{\tau}+\tilde{T}_{\tau}W_{c}T_{\tau}^T$ (using \eqref{inci} and \eqref{inc}) where $W_{\tau} \in \mathbb{R}^{(n-1)\times (n-1)}$ and $W_{c} \in \mathbb{R}^{(m-n+1)\times (m-n+1)}$ are diagonal matrices containing the weights of the edges in $\mathcal{E}_{\tau}$ and $\mathcal{E}_{c}$, respectively. From \eqref{inci}-\eqref{inci1}, the columns of $T_{\tau}$ and $\tilde{T}_{\tau}$ are the columns $n$ through $m$ of $R$ and $\tilde{R}$, respectively. Thus, $\tilde{R}WR^{T}= W_{\tau} +\sum_{i=n}^{m}w_{i}\tilde{r}_{i}r_{i}^{T}$. Now, using the Sherman-Morrison formula for inverse of rank one updates \cite{CM} iteratively, $D_{m-n+2}=(\tilde{R}WR^{T})^{-1}$ can be obtained as edges in $\mathcal{E}_c$ are added one by one to the rooted in-branching, $\mathcal{G}_\tau$, with the initial value $D_1= W_{\tau}^{-1}$ and the update rule given by
\begin{align}
\label{update}
D_{i+1}=D_{i}-\frac{w_{n+i-1}D_{i}\tilde{r}_{n+i}r_{n+i-1}^{T}D_{i}}{1+w_{n+i-1}r_{n+i-1}^{T}D_{i}\tilde{r}_{n+i-1}}.
\end{align}
It follows from \eqref{update} that for each additional edge $e_k \in \mathcal{E}_c$ incorporated, the $j$-th row, corresponding to its sibling edge $e_j \in \mathcal{E}_\tau$, is updated. Moreover, only those entries of the $j$-th row which correspond to edges in $\mathcal{G}_\tau$ that comprise the equivalent signed path of $e_k$ are updated. For instance, in Fig. \ref{demo}, when $e_8$ is added, only $[D_{i}]_{22}$, $[D_{i}]_{23}$, $[D_{i}]_{25}$ and $[D_{i}]_{26}$ in the second row will be updated. Only rows that have already been updated at earlier iterations can be affected. 
\begin{theorem} The consensus protocol, over a weighted directed acyclic graph  $\mathcal{G}$, with positive weights and a rooted in-branching, is robustly stable to all perturbations $\delta_i$ on edge weight $w_i$, if the sum of the out-degree weights of the parent node of edge $e_i$ is positive.
\end{theorem}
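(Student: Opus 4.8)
The plan is to sidestep the Nyquist construction of Theorem~1 and argue the spectrum of the perturbed Laplacian directly. Under a perturbation of the single weight $w_i$, the protocol \eqref{con} becomes $\dot{x}=-\bar{L}_g^{\Delta}x$ with $\bar{L}_g^{\Delta}=\mathcal{A}(\mathcal{G})(W+P_i\Delta P_i^{T})E(\mathcal{G})^{T}$, and robust stability amounts to requiring that for every admissible $\delta_i$ the matrix $\bar{L}_g^{\Delta}$ has a simple zero eigenvalue and all remaining eigenvalues in the open right half plane. I would first note that $E(\mathcal{G})^{T}\mathbf{1}_n=\mathbf{0}_m$ for any weighting, so $\bar{L}_g^{\Delta}\mathbf{1}_n=\mathbf{0}_n$ and $0$ is always an eigenvalue with right eigenvector $\mathbf{1}_n$; the real content is the location of the other eigenvalues together with the simplicity of $0$.

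The first substantive step is to use acyclicity to triangularize $\bar{L}_g^{\Delta}$. Ordering the vertices $v_1,\dots,v_n$ topologically, so that every edge $v_a\to v_b$ has $a<b$, the $(k,\ell)$ entry of $\bar{L}_g^{\Delta}$ with $k\neq\ell$ is minus the sum of the (possibly perturbed) weights of the edges from $v_k$ to $v_\ell$, and this vanishes whenever $k>\ell$. Hence $\bar{L}_g^{\Delta}$ is upper triangular, and its eigenvalues are its diagonal entries $[\bar{L}_g^{\Delta}]_{kk}=\sum_{e_\ell \text{ leaving } v_k}(\text{weight of }e_\ell)$, i.e.\ the perturbed out-degree weights of the vertices.

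The second step is to identify these out-degree weights. Because $\mathcal{G}$ has a rooted in-branching, every non-root vertex has an outgoing (branching) edge, while acyclicity prevents the root from having any outgoing edge, so the root is the unique vertex of out-degree zero; taking it to be $v_n$ accounts for the eigenvalue $0$. Every other vertex that is not the parent of $e_i$ has out-degree weight equal to a sum of one or more nominal positive weights, hence strictly positive. The parent of $e_i$ has out-degree weight $(w_i+\delta_i)+\sum_{e_\ell \text{ sibling of } e_i}w_\ell$, which is precisely the sum of the out-degree weights of the parent node of $e_i$ in the perturbed graph; by hypothesis it is positive. Thus $0$ is a simple eigenvalue of $\bar{L}_g^{\Delta}$ and all others are real and strictly positive, so the protocol reaches consensus, which is the claim.

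I expect the only point needing care to be the simplicity of the zero eigenvalue, i.e.\ checking that no vertex other than the root has its out-degree weight driven to zero; this is immediate, since a single-edge perturbation modifies exactly one diagonal entry of $\bar{L}_g^{\Delta}$ — the one belonging to the parent of $e_i$ — leaving every other diagonal entry an untouched sum of positive weights. As a cross-check against the earlier development, combining Corollary~3 with the Sherman--Morrison recursion \eqref{update} and Lemmas~10--11 shows that $P_i^{T}R^{T}(\tilde{R}WR^{T})^{-1}\tilde{R}P_i$ is the reciprocal of this same nominal out-degree weight sum, so the symmetric estimate \eqref{theo} places $|\delta_i|$ below that sum; the triangular argument additionally reveals that the upper side of that estimate is redundant — increasing an edge weight never destabilizes a directed acyclic graph — which is why the sharp condition is the one-sided positivity of the perturbed out-degree weight.
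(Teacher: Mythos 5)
Your proof is correct, but it takes a genuinely different route from the paper's. The paper stays inside the edge-Laplacian framework: it specializes the Nyquist bound of Theorem~1 via Corollary~3 and then spends the proof on an iterative Sherman--Morrison computation \eqref{update}, supported by the sign-structure Lemmas~10--11, to show that $P_i^{T}R^{T}(\tilde{R}WR^{T})^{-1}\tilde{R}P_i$ is the reciprocal of the out-degree weight sum at the parent of $e_i$. You bypass the edge coordinates entirely: a topological ordering makes the perturbed node Laplacian $\mathcal{A}(\mathcal{G})(W+P_i\Delta P_i^{T})E(\mathcal{G})^{T}$ upper triangular, its eigenvalues are read off as the perturbed out-degree weights, and the root (necessarily the unique sink, since an outgoing edge from a globally reachable node would close a directed cycle) supplies the lone structural zero. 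Your argument is more elementary, shows in addition that the nonzero eigenvalues are real, and yields the sharper one-sided conclusion that only a decrease of the weight can destabilize a directed acyclic graph, so the symmetric bound \eqref{theo} is conservative for $\delta_i>0$ --- a refinement the paper's proof does not make explicit. What the paper's route buys is uniformity with the rest of the development: the gain-margin/Sherman--Morrison machinery is the DAG specialization of a Nyquist argument that also covers cyclic digraphs (Theorem~3), and it exhibits the margin as a diagonal entry of $(\tilde{R}WR^{T})^{-1}$, the quantity that carries the effective-resistance interpretation; your triangularization is intrinsically tied to acyclicity and does not extend. The one step you should state rather than assume is the equivalence between consensus of \eqref{con} and your spectral condition on $\bar{L}_g^{\Delta}$ (simple zero eigenvalue, remaining spectrum in the open right half plane); this is standard, and also follows from Lemma~8 applied with the perturbed weight matrix, so it is not a gap.
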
 
\begin{proof} Suppose the robustness of edge $e_k$ is to be investigated. If $e_k$ has no sibling edges, then it is a part of $\mathcal{G}_\tau$. Also, a zero or a negative weight on $e_k$ will cut off the parent node of $e_k$, so that its state will not change or diverge from those of other nodes. Thus, the edge weight for $e_k$, which is the sum of out degree weights of its parent node, must be positive.  
Suppose $e_k$ has one or more sibling edges. Call this set of siblings of $e_k$ (including $e_k$) as $S_k$ with $|S_k|=u>0$. Choose a rooted in-branching $\mathcal{G}_\tau$ for $\mathcal{G}$. Clearly, either $e_k$ or one of its siblings is part of $\mathcal{G}_\tau$. Now, consider the set $\mathcal{E}_k=\mathcal{E}_c \setminus S_k$. None of the edges in $\mathcal{E}_k$ is a sibling to $e_k$. Suppose edges in $\mathcal{E}_k$ are added to $\mathcal{G}_\tau$ one by one. The matrix $D_2$, after addition of $e_n$, is then given by
\begin{align}
D_2=\begin{bmatrix}1/w_1 & \ldots & 0 & \ldots & 0 \\
\vdots & \ddots & \vdots & \vdots & \vdots \\
 l_1 & \ldots & \frac{1}{w_j+w_n}& \ldots &  l_{n-1} \\
\vdots & \vdots & \vdots & \ddots & \vdots \\
0 & \ldots & 0 & \ldots & 1/w_{n-1}
\end{bmatrix}
\label{ed1}
\end{align}
where, $e_n \in \mathcal{E}_k$ is a sibling to $e_j \in \mathcal{E}_\tau$.
and $l_i=\mp \frac{ w_n/w_i w_j}{1+w_n/w_j}$ or $0$, depending on whether edge $e_i \in \mathcal{E}_\tau$ forms a part of the signed path vector for $e_n$ 
or not. Similarly, if $e_{n+1}\in \mathcal{E}_k$ is also a sibling of $e_j$, only the $j$-th row of $D_2$ is similarly updated in $D_3$ with $[D_3]_{jj}=\frac{1}{w_j+w_n+w_{n+1}}$. However, if $e_{n+1}$ is a sibling of $e_s \in \mathcal{E}_\tau$ ($s \neq j$) then only the $s$-th row and possibly the $j$-th row is updated from $D_2$ to $D_3$. 
This is because $[w_{n+1}D_{2}\tilde{r}_{n+1}]$ in \eqref{update}, which selects the $s$-th column of $w_{n+1}D_{2}$, has a non-zero component at the $s$-th position and possibly the $j$-th position. Extending this argument by induction, it follows that while updating from $D_q$ to $D_{q+1}$, only those rows corresponding to edges in $\mathcal{G}_\tau$ whose siblings have been added in the first $q$ steps are updated. Suppose all edges in $\mathcal{E}_k$ are added so that only edges in $S_k$ remain to be added in order to obtain $(RW\tilde{R}^{T})^{-1}$. Clearly, till this step, the $k$-th row has not been updated. When the first edge $e_{k1}$ in $S_k$ is added, the $k$-th row is updated for the first time with the $k$-th diagonal entry changing from $1/w_k$ to $\frac{1}{w_k+w_{k1}}$. This follows from Lemmas 10-11 as $[w_{k1}D_{m-n-u+2}\tilde{r}_{k1}]$ has non-zero component at $k$-th position and at those positions which correspond to edges in $\mathcal{G}_\tau$ whose siblings' signed path included $e_k$ 
Thus, other than at the $k$-th position (which has an entry of $+1$), the vector $r_{k1}$ will have a $0$ at exactly those positions where $[w_{k1}D_{m-n-u+2}\tilde{r}_{k1}]$ has non-zero entries. Hence, the $k$-th component of $r_{k1}D_{m-n-u+2}$ is simply $1/w_k$. Also, $1+w_{k1}r_{k1}^{T}D_{m-n-u+2}\tilde{r}_{k1}=1+w_{k1}/w_{k}$. Subsequently, as other edges in $S_k$ are added till it is exhausted, the $k$-th diagonal entry, at any stage, contains the reciprocal of the out-degree sum of the parent node of $e_k$. 

The premultiplication and postmultiplication of $D_{m-n+2}=(RW\tilde{R})^{-1}$ by $P_i^{T}R^{T}$ and $\tilde{R}P_i$, respectively, chooses the $k$-th diagonal entry, due to the complementary nature of the non-zero entries of $r_{k}$ and $D_{m-n+2}\tilde{r}_{k1}$ except at the $k$-th position. Since $e_k$ was chosen arbitrarily, the margin for perturbation of an edge weight is equal to the sum of the out-degrees of the parent node. 
\end{proof} 
\subsection{Consensus over Uncertain Cycle Digraph}

Theorem 2 deals with a digraph having exactly one globally reachable node. In the cycle digraph however, all the $n$ nodes are globally reachable. Removing any one of the edges from a cycle digraph results in the rooted in-branching. Since the cycle graph has multiple globally reachable nodes, the relation in \eqref{inc}-\eqref{inci1} does not hold. 
But a suitable similarity transformation of the edge and graph Laplacians leads to a block diagonal matrix in this case, instead of block triangular ones, and for the cycle digraph $\mathcal{A}=I_n$. The cycle digraph is specially important as it lies at the heart of the well known cyclic pursuit algorithm \cite{klamkin, gagnon, bruck, marshall, beard, AS, me1, me2}. Some relevant results are now stated.
\begin{lemma}The graph Laplacian for weighted cyclic pursuit, $\bar{L}_g=\mathcal{A}WE(\mathcal{G})^T$ is similar to $\left[\begin{array}{cc}
E(\mathcal{G_\tau})^{T}WR^{T} & 0\\
0 & 0 \end{array} \right]$.
\end{lemma}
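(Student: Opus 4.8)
The plan is to build on Lemma~8 and then eliminate the single off-diagonal block that prevents its factorisation from being block diagonal. As noted just before the statement, deleting any one edge of the cycle digraph leaves a rooted in-branching, so Lemma~8 applies directly; and since $\mathcal{A}(\mathcal{G})=I_n$ for the cycle digraph, Lemma~8 already gives that $\bar{L}_g=\mathcal{A}WE(\mathcal{G})^{T}$ is similar, via $S=[E(\mathcal{G_\tau})(E(\mathcal{G_\tau})^{T}E(\mathcal{G_\tau}))^{-1}~\frac{1}{n}\mathbf{1}_n]$, to
$$M_0=\left[\begin{array}{cc} E(\mathcal{G_\tau})^{T}WR^{T} & \mathbf{0}_{n-1}\\ \mathbf{1}_n^{T}WR^{T} & 0 \end{array}\right].$$
Thus it only remains to show that the residual lower-left block $\mathbf{1}_n^{T}WR^{T}$ can be annihilated by a further similarity that leaves the other three blocks untouched.

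For this I would conjugate $M_0$ by the elementary unit lower-triangular matrix $Q=\left[\begin{array}{cc} I_{n-1} & 0\\ q^{T} & 1 \end{array}\right]$, whose inverse is obtained simply by negating $q^{T}$. A direct block computation gives
$$Q^{-1}M_0Q=\left[\begin{array}{cc} E(\mathcal{G_\tau})^{T}WR^{T} & \mathbf{0}_{n-1}\\ \mathbf{1}_n^{T}WR^{T}-q^{T}E(\mathcal{G_\tau})^{T}WR^{T} & 0 \end{array}\right],$$
so the lower-left block vanishes precisely for $q^{T}=\mathbf{1}_n^{T}WR^{T}\,(E(\mathcal{G_\tau})^{T}WR^{T})^{-1}$. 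This inverse exists: the cycle digraph has all $n$ nodes globally reachable, and for a digraph with multiple globally reachable nodes the matrix $E(\mathcal{G_\tau})^{T}\mathcal{A}(\mathcal{G})WR^{T}$ is invertible by the remark following Corollary~2 --- and with $\mathcal{A}(\mathcal{G})=I_n$ this matrix is exactly $E(\mathcal{G_\tau})^{T}WR^{T}$. Composing the two changes of basis, $\bar{L}_g$ is similar via $SQ$ to $\left[\begin{array}{cc} E(\mathcal{G_\tau})^{T}WR^{T} & 0\\ 0 & 0 \end{array}\right]$, which is the asserted block-diagonal form.

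The only real content here is the invertibility of $E(\mathcal{G_\tau})^{T}WR^{T}$, which is exactly what lets the block-triangular form of Lemma~8 be upgraded to a block-diagonal one for the cycle digraph; it is also consistent with the zero eigenvalue of $\bar{L}_g$ being simple, so that the entire nonzero spectrum is carried by the $(n-1)\times(n-1)$ leading block. Everything else is routine block-matrix algebra, and no graph-theoretic machinery beyond Lemma~8 and that invertibility remark is required.
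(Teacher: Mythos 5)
Your proof is correct, but it takes a genuinely different route from the paper's. The paper does not go through Lemma~8 at all: it writes down a single similarity, taking $S_{1}^{-1}=\left[\begin{array}{c} E(\mathcal{G_\tau})^{T}\\ \mathbf{1}_n^{T}W^{-1}\end{array}\right]$ and $S_{1}=\left[WE(\mathcal{G_\tau})(E(\mathcal{G_\tau})^{T}WE(\mathcal{G_\tau}))^{-1}\quad \mathbf{1}_n\left(\sum_i\tfrac{1}{w_i}\right)^{-1}\right]$, i.e.\ it replaces the last row and column of the Lemma~8 transformation by the actual left and right null vectors of the cyclic $\bar{L}_g$ (since $\mathbf{1}_n^{T}W^{-1}\bar{L}_g=\mathbf{1}_n^{T}E(\mathcal{G})^{T}=\mathbf{0}^T$ and $\bar{L}_g\mathbf{1}_n=\mathbf{0}$), so the off-diagonal block vanishes identically in one step and no invertibility argument is needed. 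Your two-step version --- Lemma~8 specialized to $\mathcal{A}=I_n$, followed by conjugation with a unit block-lower-triangular $Q$ that performs block Gaussian elimination --- is sound: the computation of $Q^{-1}M_0Q$ is right, and the invertibility of $E(\mathcal{G_\tau})^{T}WR^{T}$ that your choice of $q$ requires is precisely the remark following Corollary~2 for digraphs with multiple globally reachable nodes. What your argument buys is generality: the same elimination block-diagonalizes the Lemma~8 form for any digraph with multiple globally reachable nodes, not just the cycle. What the paper's buys is an explicit closed-form transformation that avoids the invertibility appeal and exhibits the weighted left null vector $\mathbf{1}_n^{T}W^{-1}$ directly, whose normalization $\sum_i 1/w_i$ is exactly the harmonic sum that reappears in the bound of Theorem~3.
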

\begin{proof} Consider the matrices $$S_{1}=\left[WE(\mathcal{G_\tau})(E(\mathcal{G_\tau})^{T}WE(\mathcal{G_\tau}))^{-1} \quad\mathbf{1}_n\left(\sum_i\frac{1}{w_i}\right)^{-1}\right]$$ and $S_{1}^{-1}=\left[\begin{array}{c} E(\mathcal{G_\tau})^{T}\\ \mathbf{1}_n^{T}W^{-1}\end{array}\right]$. It is at once apparent that $S_{1}^{-1}\bar{L}_{g}S_{1}=\left[\begin{array}{cc}
E(\mathcal{G_\tau})^{T}WR^{T} & 0\\
0 & 0 \end{array} \right]$. 
\end{proof}
\begin{lemma}
The edge Laplacian for weighted cyclic pursuit, $\bar{L}_e=E(\mathcal{G})^{T}\mathcal{A}W$ is similar to $\left[\begin{array}{cc}
E(\mathcal{G_\tau})^{T}WR^{T} & 0\\
0 & 0 \end{array} \right]$.
\end{lemma}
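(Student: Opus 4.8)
The plan is to obtain this from Lemma~12 rather than redo the similarity construction. The point is that for the cycle digraph $\mathcal{A}=I_n$, so the two Laplacians collapse to $\bar{L}_e=E(\mathcal{G})^{T}\mathcal{A}W=E(\mathcal{G})^{T}W$ and $\bar{L}_g=\mathcal{A}WE(\mathcal{G})^{T}=WE(\mathcal{G})^{T}$. Since all weights are positive, $W$ is invertible, and $W\bar{L}_e W^{-1}=W\bigl(E(\mathcal{G})^{T}W\bigr)W^{-1}=WE(\mathcal{G})^{T}=\bar{L}_g$; that is, $\bar{L}_e=W^{-1}\bar{L}_g W$. Hence $\bar{L}_e$ and $\bar{L}_g$ are similar, and since Lemma~12 already shows that $\bar{L}_g$ is similar to the block-diagonal matrix in the statement, transitivity of the similarity relation completes the argument.

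If one prefers the transformation written out in the style of Lemma~12, I would take $S_{2}=W^{-1}S_{1}$, with $S_{1}$ the matrix used in the proof of Lemma~12, so that $S_{2}^{-1}=S_{1}^{-1}W$; explicitly
\begin{align*}
S_{2}&=\left[E(\mathcal{G_\tau})\bigl(E(\mathcal{G_\tau})^{T}WE(\mathcal{G_\tau})\bigr)^{-1}\quad W^{-1}\mathbf{1}_n\bigl(\textstyle\sum_i\tfrac{1}{w_i}\bigr)^{-1}\right],\\
S_{2}^{-1}&=\left[\begin{array}{c}E(\mathcal{G_\tau})^{T}W\\ \mathbf{1}_n^{T}\end{array}\right].
\end{align*}
Then $S_{2}^{-1}\bar{L}_{e}S_{2}=S_{1}^{-1}W\,E(\mathcal{G})^{T}W\,W^{-1}S_{1}=S_{1}^{-1}\bigl(WE(\mathcal{G})^{T}\bigr)S_{1}=S_{1}^{-1}\bar{L}_{g}S_{1}$, which is exactly the claimed matrix by Lemma~12.

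I do not expect a real obstacle here. The only things that need checking are that $S_{2}$ and $S_{2}^{-1}$ are mutual inverses and that the off-diagonal blocks of $S_{1}^{-1}\bar{L}_{g}S_{1}$ vanish. The first reduces, via $S_{2}=W^{-1}S_{1}$, to $S_{1}^{-1}S_{1}=I_n$, which rests on $E(\mathcal{G_\tau})^{T}\mathbf{1}_n=\mathbf{0}$; the second rests on $\mathbf{1}_n^{T}R^{T}=(R\mathbf{1}_n)^{T}=\mathbf{0}$, which is the single place where the cycle structure is used, since the unique chord of the cycle is encoded along the in-branching by a vector that cancels $\mathbf{1}_{n-1}$. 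Both facts are already used (at least implicitly) in the proof of Lemma~12, so the present lemma is essentially immediate; a fully self-contained version would simply repeat that proof verbatim with $S_{1}$ replaced by $S_{2}$.
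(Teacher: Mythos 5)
Your proof is correct, but it takes a different route from the paper. The paper does not go through Lemma~12 at all: it constructs an independent similarity transformation directly on the edge space, taking $S_{2}=[R^{T}~~W^{-1}\mathbf{1}_n]$ with $S_{2}^{-1}$ built from $(RWR^{T})^{-1}RW$ and $\left(1/\sum_i\tfrac{1}{w_i}\right)\mathbf{1}_n^{T}$, and verifies $S_{2}^{-1}\bar{L}_{e}S_{2}$ by direct computation (note its $S_2$ is not your $W^{-1}S_1$). Your argument instead exploits that for the cycle $m=n$ and $\mathcal{A}=I_n$, so that $\bar{L}_e=E^{T}W=W^{-1}(WE^{T})W=W^{-1}\bar{L}_g W$, and then invokes Lemma~12 and transitivity of similarity. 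This is shorter, correctly isolates the one place the cycle structure enters the block computation (namely $R\mathbf{1}_n=\mathbf{1}_{n-1}+T_\tau=\mathbf{0}$, since the chord's signed path traverses every tree edge backwards), and as a bonus delivers Lemma~14 immediately with the explicit conjugating matrix $W$, which is cleaner than the paper's composite $S=S_2S_1^{-1}$. What the paper's choice buys in exchange is a transformation written in the coordinates $[R^{T}~~W^{-1}\mathbf{1}_n]$ adapted to the edge-state reduction of Section~III (mirroring the $V=[R^{T}~N_\tau]$ construction of Lemma~9), which makes the passage to the reduced dynamics $\dot{x}_\tau=-E(\mathcal{G_\tau})^{T}WR^{T}x_\tau$ in \eqref{red1} transparent. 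The only cosmetic caveat in your version is that $\mathcal{A}=I_n$ presumes the labelling in which edge $e_i$ leaves vertex $i$; for a general labelling $\mathcal{A}$ is a permutation matrix and you would conjugate by $\mathcal{A}W$ instead of $W$, which changes nothing essential.
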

\begin{proof}Consider the matrices $S_{2}=[R^{T}~~~W^{-1}\mathbf{1}_n]$ and $S_{2}^{-1}=\left[\begin{array}{c} (R(\mathcal{G_\tau})WR(\mathcal{G_\tau})^{T})^{-1}R(\mathcal{G_\tau})W\\\left(1/\sum_i\frac{1}{w_i}\right)\mathbf{1}_n^{T}\end{array}\right]$. It follows that $S_{2}^{-1}\bar{L}_{e}S_{2}=\left[\begin{array}{cc}
E(\mathcal{G_\tau})^{T}WR^{T} & 0\\
0 & 0 \end{array} \right]$. 
\end{proof}
\begin{lemma}
For the weighted cycle digraph, the edge Laplacian is similar to the graph Laplacian.
\end{lemma}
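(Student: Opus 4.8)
The plan is simply to chain together the two immediately preceding lemmas through the transitivity of matrix similarity. Lemma~12 produces an invertible matrix $S_1$ (built from $E(\mathcal{G}_\tau)$ and $\mathbf{1}_n$) such that
$$S_1^{-1}\bar{L}_g S_1 = \begin{bmatrix} E(\mathcal{G}_\tau)^{T}WR^{T} & 0 \\ 0 & 0 \end{bmatrix} =: M,$$
and Lemma~13 produces an invertible matrix $S_2$ (built from $R^{T}$ and $W^{-1}\mathbf{1}_n$) such that $S_2^{-1}\bar{L}_e S_2 = M$. Since $\bar{L}_g$ and $\bar{L}_e$ are both similar to one and the same matrix $M$, and matrix similarity is an equivalence relation, it follows at once that $\bar{L}_g$ and $\bar{L}_e$ are similar.

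To exhibit the change of basis explicitly, set $T = S_1 S_2^{-1}$. Then $T$ is invertible and $T^{-1}\bar{L}_g T = S_2 S_1^{-1}\bar{L}_g S_1 S_2^{-1} = S_2 M S_2^{-1} = \bar{L}_e$, which is the asserted similarity. The only facts one must have in hand are the invertibility of $S_1$ and $S_2$ for the cycle digraph, but these are already established inside the proofs of Lemmas~12 and 13: they rest on $\mathcal{A}=I_n$ for a cycle, on the fact that deleting any single edge of the cycle leaves a rooted in-branching whose unoriented incidence matrix $E(\mathcal{G}_\tau)$ has full column rank $n-1$ (so $E(\mathcal{G}_\tau)^{T}WE(\mathcal{G}_\tau)$ is nonsingular), and on $R$ having full row rank (so $RWR^{T}$ is nonsingular).

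There is no real obstacle here: all of the computational work was done in Lemmas~12 and 13, and the present statement merely records the transitive consequence. It is worth emphasising what is gained over Lemma~6: there we only knew that $\bar{L}_g$ and $\bar{L}_e$ share their non-zero eigenvalues, whereas here --- because the cycle digraph has $m=n$, so both matrices are $n\times n$ with a one-dimensional kernel ($m-n+1=1$) --- full similarity holds, hence the two Laplacians have identical spectra with identical Jordan structure, the nontrivial part being the $(n-1)\times(n-1)$ invertible block $M = E(\mathcal{G}_\tau)^{T}WR^{T}$ whose eigenvalues lie in the open right half plane.
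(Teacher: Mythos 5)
Your proof is correct and takes essentially the same approach as the paper: both arguments invoke Lemmas~12 and 13 to exhibit the common block matrix and then compose the two transformations, the paper conjugating $\bar{L}_e$ by $S=S_2S_1^{-1}$ while you conjugate $\bar{L}_g$ by $T=S_1S_2^{-1}$, which is the identical similarity read in the opposite direction. The additional remarks on invertibility of $S_1,S_2$ and the comparison with Lemma~6 are sound but not part of the paper's (one-line) proof.
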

\begin{proof} Since both the graph Laplacian, $\bar{L}_g$ and the edge Laplacian, $\bar{L}_e$ for the cyclic digraph are similar to the matrix $\left[\begin{array}{cc}
E(\mathcal{G_\tau})^{T}WR^{T} & 0\\
0 & 0 \end{array} \right]$, so using the transformation $S^{-1}\bar{L}_{e}S$, with $S=S_{2}S_{1}^{-1}$, the result follows. 
\end{proof}
Thus, the reduced edge version of cyclic pursuit is 
\begin{align}
\label{red1}
\dot{x}_{\tau}=-E(\mathcal{G_\tau})^{T}WR^{T}x_{\tau}.
\end{align}
Considering a perturbation in $w_1$, it follows that
\begin{align}
\label{diagcp}
\dot{x}_{\tau}=-E(\mathcal{G_\tau})^{T}(W+P_i\Delta P_i^{T})R^{T}x_{\tau},
\end{align}
with the uncertainties belonging to the set given by \eqref{unc} and $P_i\in \mathbb{R}^{n}$ is a $\{0,1\}$ vector with 0-entries everywhere except at $[P]_{1}$. This is because in the cycle graph every edge is equivalent and without loss of generality the perturbation may be considered in $w_1$. {Here too, the phase crossover occurs at $\omega=0$} and so $M(0)$ is explicitly computed to be $M(0)=-\dfrac{\sum_{i=2}^{n}\frac{1}{w_i}}{1+w_1\sum_{i=2}^{n}\frac{1}{w_i}}$ \cite{DM1}. The Nyquist criteria yields
\begin{align}
\label{fr}
-w_1-\dfrac{1}{\sum_{i=2}^n\frac{1}{w_i}}<\bar{\delta}\Rightarrow w_{1} + \bar{\delta}>-\dfrac{1}{\sum_{i=2}^n\frac{1}{w_i}}.
\end{align}
Thus, the robust stability criterion for cyclic pursuit is stated in the following theorem, similar to \cite{AS}. 
\begin{figure}[t]
     \centering
      \includegraphics[scale=0.50]{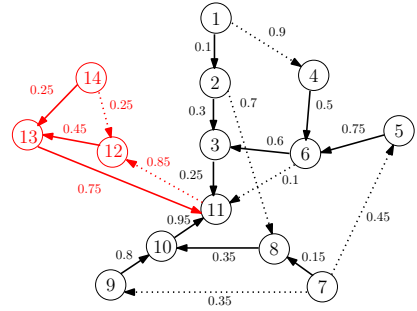}
      \caption{Weighted digraph in the examples (black portions for first example, black+red for second example).} 
      \label{WD}
      \vspace*{-0.5cm}
\end{figure}

\begin{figure*}[!ht]
     \centering
     \begin{tabular}{ccc}
     \hspace*{-0.55 cm}
     \subfloat[]{
      \includegraphics[scale=0.35]{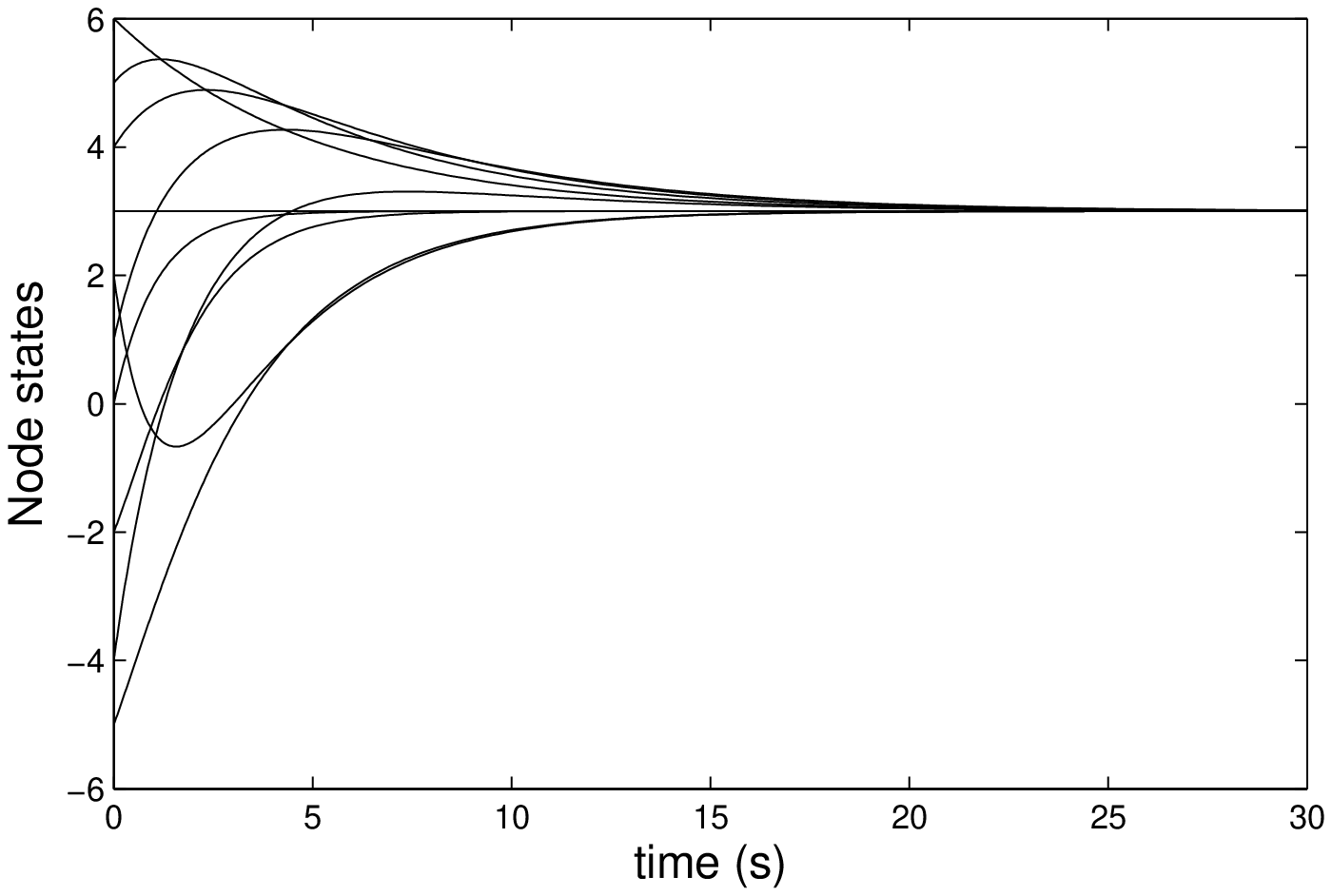}
      \label{g12}}
      \hspace*{-0.335 cm}
        \vspace*{-0.120 cm}
    \subfloat[]{
      \includegraphics[scale=0.405]{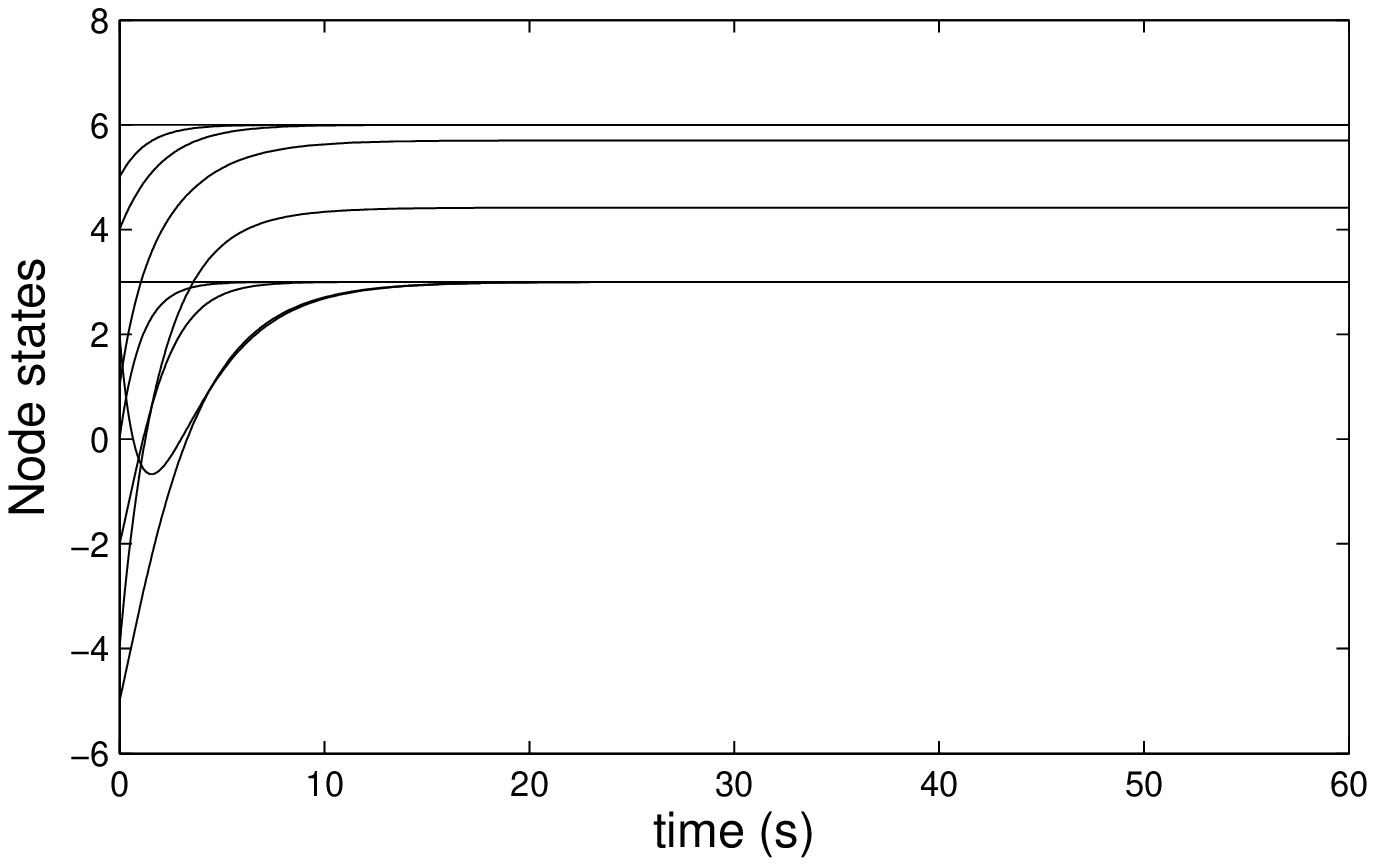}
      \label{g23}}
      \hspace*{-0.335 cm}
        \vspace*{-0.120 cm}
     \subfloat[]{
      \includegraphics[scale=0.40]{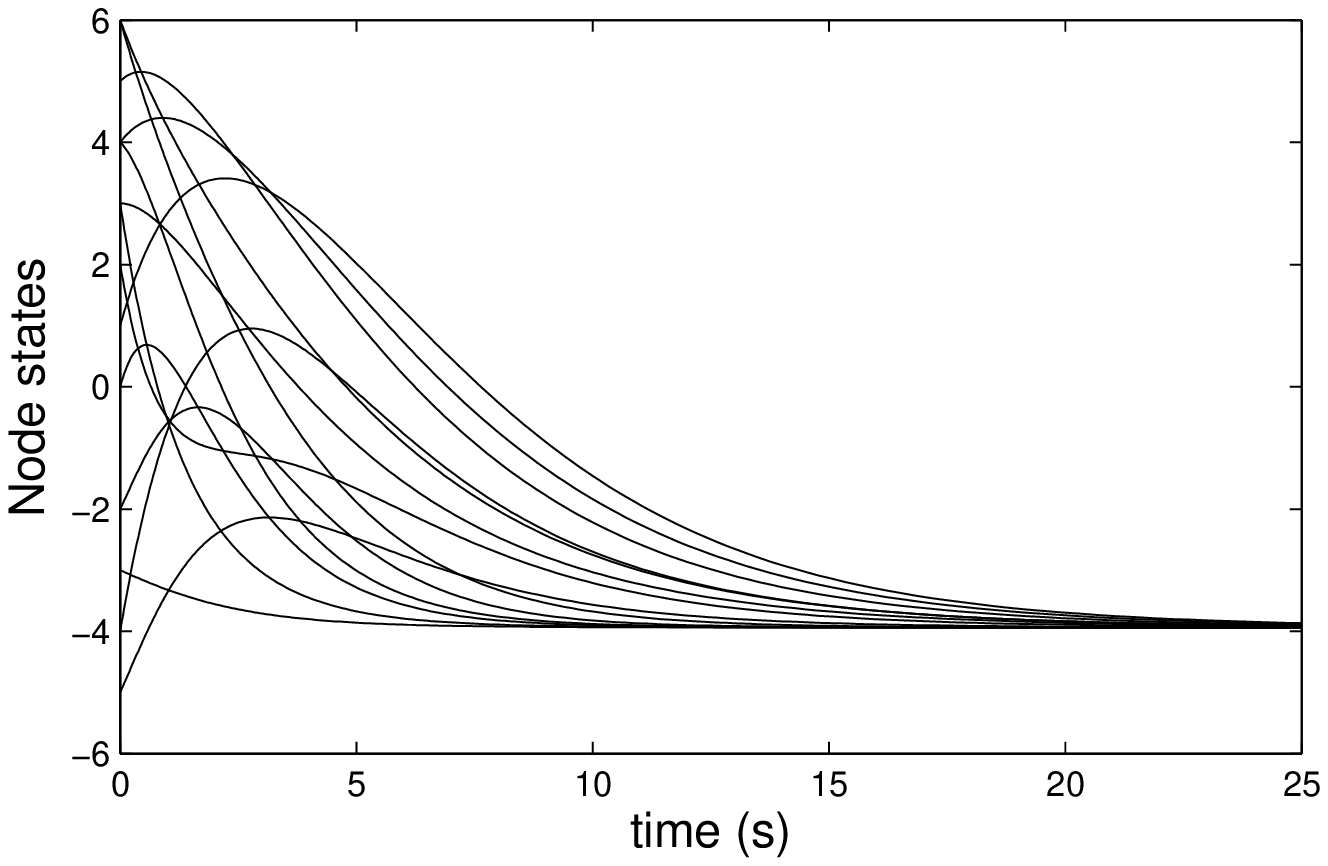}
      \label{f12}}
      \hspace*{-0.4750 cm}
      \end{tabular}
      \caption{Node states for perturbed weight on $e_{6,11}$ (a) within tolerable bound, (b) at exact bound, in first example and (c) for perturbed weight on $e_{12,13}$ in second example within bound.}
        \vspace*{-0.50 cm}
\end{figure*}

\begin{theorem}
Given a perturbation on a single edge, say $e_j$ (with nominal weight $w_j$), the heterogeneous cyclic pursuit system is stable for perturbations bounded below by $\bar{\delta}$:
 \begin{align}
\label{cond}
\bar{\delta}>-w_j-\dfrac{1}{\sum_{i=1, i\neq j}^n\frac{1}{w_i}}.
\end{align}
\end{theorem}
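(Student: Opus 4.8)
The statement of Theorem~3 is essentially a restatement, in explicit scalar form, of the Nyquist-based robustness bound from Theorem~1 specialized to the cycle digraph. The plan is to invoke the machinery already developed in Section~\ref{sec:robust} for cyclic pursuit and simply read off the gain margin. First I would note that, by the symmetry of the cycle digraph (every edge is equivalent), there is no loss of generality in assuming the perturbed edge is $e_1$, so the reduced edge dynamics are given by \eqref{diagcp} with $P_i$ the first standard basis vector in $\mathbb{R}^n$. By Lemmas~16--18, the edge Laplacian of the weighted cycle is similar to $E(\mathcal{G}_\tau)^T W R^T$, which has all eigenvalues in the open right half plane, so the nominal reduced system \eqref{red1} is stable and the associated SISO transfer function $M(s)$ has no pole at the origin; Theorem~1 then applies.

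The next step is to evaluate the gain margin of $M(s)$. As already indicated in the text preceding the theorem, for the cycle digraph the phase crossover occurs at $\omega=0$, so the gain margin is determined by $M(0)$, which has been computed explicitly as
\[
M(0)=-\dfrac{\sum_{i=2}^{n}\frac{1}{w_i}}{1+w_1\sum_{i=2}^{n}\frac{1}{w_i}}.
\]
Applying the Nyquist condition $|\delta_1| < 1/|M(j\omega_{pc})| = 1/|M(0)|$ and the fact that the admissible perturbations form the interval $|\delta_1|\le\bar\delta$, I would rearrange as in \eqref{fr}: the condition $1/|M(0)| > \bar\delta$ becomes
\[
w_1 + \dfrac{1}{\sum_{i=2}^n\frac{1}{w_i}} > \bar\delta \quad\Longleftrightarrow\quad \bar\delta > -w_1 - \dfrac{1}{\sum_{i=2}^n\frac{1}{w_i}},
\]
since the left-hand quantity is positive (being $1/|M(0)|$). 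Re-labelling $w_1$ as the generic perturbed weight $w_j$ and the remaining sum accordingly gives exactly \eqref{cond}, completing the argument.

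The only genuine subtlety — the part I would treat most carefully — is justifying that the phase crossover of $M(s)$ is indeed at $\omega=0$, i.e.\ that $M(j\omega)$ does not wrap around to phase $-\pi$ at any nonzero frequency in a way that would tighten the bound. This follows because, for the cycle digraph, $M(s)$ is (up to sign) a first-order-type transfer function of $s$ whose Nyquist plot stays in a half-plane, so that the critical crossing of the negative real axis happens only at DC; this is the content of the bracketed remark ``Here too, the phase crossover occurs at $\omega=0$'' and mirrors the analysis in \cite{DM1}. Everything else is the routine algebraic rearrangement shown above, and the sign bookkeeping on $\bar\delta$ (the perturbation may be negative, which is precisely what makes the bound interesting) requires only that one keep track of the direction of the inequality when dividing by the positive quantity $1/|M(0)|$.
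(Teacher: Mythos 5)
Your proposal is correct and follows essentially the same route as the paper, whose own ``proof'' of Theorem~3 is just the derivation in the text immediately preceding it: reduce to the edge dynamics \eqref{red1}--\eqref{diagcp} via the similarity transformations for the cycle digraph (Lemmas~12--14, which you miscite as 16--18), take the perturbed edge to be $e_1$ without loss of generality, assert the phase crossover at $\omega=0$, evaluate $M(0)$, and apply the Nyquist/gain-margin condition of Theorem~1. The only blemish is your displayed ``$\Longleftrightarrow$'' between $w_1+\tfrac{1}{\sum_{i=2}^{n}1/w_i}>\bar\delta$ and $\bar\delta>-w_1-\tfrac{1}{\sum_{i=2}^{n}1/w_i}$, which are not equivalent one-sided inequalities; the correct reading (consistent with \eqref{fr}) is that $|\delta|<1/|M(0)|$ yields a two-sided interval whose binding side, because $M(0)<0$, is precisely the lower bound \eqref{cond}.
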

For the cycle graph, the limit on $w_j+\bar{\delta}$ is the equivalent resistance between the vertices $j$ and $j+1$ when the edge, $e_j$, joining nodes $j$ and $j+1$, is removed. The reciprocal of the edge weight is the resistance corresponding to each edge. In \cite{DZ1}, it was shown that for consensus over an undirected graph, an edge weight can be negative so long as this negative value is greater than a bound that equals the negative of the equivalent resistance between the vertices that the perturbed edge joins. This same interpretation holds for the directed cycle graph.

%

\section{Simulation Results}
\label{sec: sim}
Consider the weighted directed acyclic graph $\mathcal{G}$, in Fig. \ref{WD} (black portions only), with 11 nodes and 15 edges. The bold edges denote a rooted in-branching with a single globally reachable node 11. The dotted edges belong to $\mathcal{E}_{c}$. The nominal positive edge weights are shown. The edge, $e_{6,11}$ is assumed to be perturbed. 
The initial node states are $[1~ 2~ 3~ 4~ 5~ 6~ -4~ -5~ -2~ 0~ 3]$. In Fig. \ref{g12}, the perturbation on the edge weight is $-0.50$, so the perturbed weight is $-0.40$. It may be seen that consensus is achieved. In Fig. \ref{g23}, where the perturbation is exactly equal to the bound, that is $-0.70$ (computed from \eqref{rem}), so that the perturbed weight is $-0.60$ the nodes form clusters. With a perturbation of $-1.00$, consensus is not achieved as the node states diverge in this case. The Nyquist plots for convergent, clustering and divergent cases are shown in Fig. \ref{Ny} with the black dot representing the critical point $(-1,0)$. 
\par{}Next, in the graph in Fig. \ref{WD} with both the black and red portions (14 nodes and 20 edges), nodes 11, 12 and 13 are globally reachable. Hence, \eqref{rem} of Theorem 1 is used to obtain perturbation limits on the edge weights. 
A perturbation of $-0.50$ is applied to the weight on $e_{12,13}$, while the critical value is $-0.85$ and the corresponding convergent evolution of the node states is shown in Fig. \ref{f12}. 


\section{Conclusions}
\label{sec: conc}

This paper presented an analysis of the robustness margins for the edge weights of a weighted directed graph having a rooted in-branching. Although only one weight is perturbed at a time, the presented framework is suitable for analysis of multiple uncertain edge weights by employing small gain theorem. However, using present results, for any directed graph, it may be determined as to which edge is the most vulnerable. In other words, if an `attacker' wants to disrupt the consensus protocol, the present set up enables one to choose the most vulnerable edge. By suitable transformations of the edge and graph Laplacians and by considering a reduced order system the stability margin of the consensus protocol can thus be determined without explicit eigenvalue computations. Graph theoretic interpretations of the robustness margins for a directed acyclic graph and a directed cycle graph provide further insights and serve as an encouragement to interpret the result for more general graphs. 
\begin{figure}[!ht]
     \centering
      \includegraphics[scale=0.475]{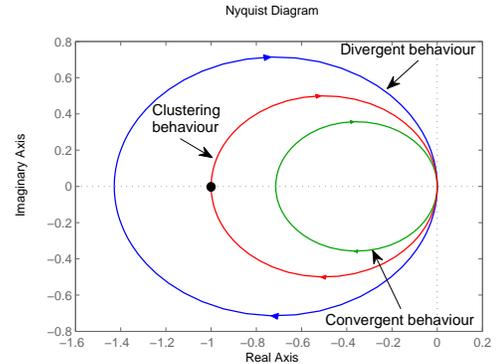}
      \caption{Nyquist plots of $M(s)\Delta$ for first example with uncertain weight on $e_{6,11}$ for the three types of behaviour.}
       \label{Ny}
       \vspace*{-0.5cm}
\end{figure}

\bibliography{bibtex_database}
\bibliographystyle{IEEEtran}

\end{document}